\newcommand{\qed}{\hspace*{\fill}
            $\Box$\smallskip}
\newcommand{\thmqed}{\hspace*{\fill}
            $\blacksquare$\smallskip}
\newenvironment{proof}{\noindent {\bf Proof:} \par}
                      {\qed}
 \newtheorem{theorem}{Theorem}
 \newtheorem{corollary}{Corollary}[section]
 \newtheorem{lemma}{Lemma}[section]
 \newtheorem{example}{Example}[section]
 \newtheorem{remark}{Remark}[section]
 \newtheorem{definition}{Definition}[section]
 \newtheorem{proposition}{Proposition}[section]
\begin{document}
\title{Minimum Price in Search Model\footnote{The Author has benefited from insightful comments from Prof. Rady, Prof. Janssen, Prof. Felbermayr and Prof. Holzner}}
\author{Sergey Kuniavsky\footnote{Munich Graduate School of Economics, Kaulbachstr. 45, Munich. Email: Sergey.Kuniavsky@lrz.uni-muenchen.de. Financial support from the Deutsche Forschungsgemeinschaft through GRK 801 is gratefully acknowledged.}}
\maketitle

\begin{abstract}

This paper investigates the effects of a low bound price. To do so, a popular and empirically proven model (Stahl (89') \cite{Stahl89}) is used. The model is extended to include an exogenously given bound on prices sellers can offer, excluding prices below such bound. The finding are rather surprising - when the bound is set sufficiently high expected price offered (EPO) by sellers drops significantly. The result seem to be robust in the parameters of the model, and driven by the information provided to consumers by such legislation step: when the limitation is set at sufficiently high levels all consumers anticipate the bound price, and searchers reject any price above it. As a result sellers offer the bound price as a pure strategy.
\end{abstract}

%\section{Introduction}
%\newpage

\section{Introduction}

Despite having free markets, it is often the case that governments intervene in trade. The most common example of such is taxation, where certain goods are taxed. This is done despite the known results on welfare impact of such interventions. However, in some cases, e.g. Alcohol or Tobacco products, the government sees a reason to reduce the consumption by imposing a tax. One of the reasons behind it is noted by Cnossen and Sijbren in \cite{alc_price}, table 8, an impressive literature connects higher price to lower consumption of alcohol, in coordination with common belief, and helps to reduce many of the high costs of Alcohol consumption - over 120 bn Euros in 2003 only.

%\begin{figure}
% \centering      
 %\includegraphics[width=90mm]{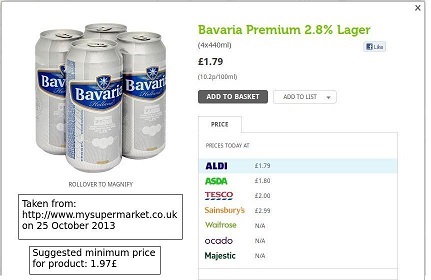}
%\includegraphics[scale=0.7]{alc_price.jpg}
%  \label{alc_price_fig}
%  \caption{Price of Alcoholism}
%\end{figure}

%For comparison, in western Europe such 'event' consumption is much lower. For example, Germany has the lowest score (1 of 5), US has score 2 of 5 and UK is one of the few EU countries with highest EU scores (3 of 5), where Russia is one of two countries with score 5. Therefore, most of heavy alcohol consumption in the west is private and therefore pricing policy should have a much larger effect than in Russia. To emphasize the cultural effect, the Oktoberfest in Munich yearly attracts millions of beer drinkers, in 2013 with beer prices between 9.40 and 9.85 \footnote{http://www.oktoberfest.de/en/article/About+the+Oktoberfest/About+the+Oktoberfest/ Beer+prices+for+2013/3351/}, when retail store beer is far cheaper.

An additional new step of intervention is suggested by the Scottish government. Scotland proposes a minimum price per alcohol unit (10 ml) of 0,40\textsterling. Meng et al. in a report by Sheffield university \cite{report}, perform a research that concludes that such a step would increase prices on alcohol. Clearly, in cases when this bound falls below the market price it has little effect and it would increase prices when it falls above market price level. As shown in Table 2.1 in Meng et al. \cite{report}, the latter happens in many cases, but not in all. Here we investigate what happens in the interim case, when the minimum price is set at market price level. Namely, some retailers sold products below this price, and similar products were also sold above the limit. Clearly, the interesting case is when the average market price is above the limitation. The prices are taken from English supermarket, however due to the close relation between England and Scotland, and a similar step being planned by the English government, it is a relevant example.

To emphasize, suggested minimum prices would be as follows:
\begin{itemize}
\item Liter of 5\% beer would cost at least 2\textsterling (around 2,5\euro).
\item A 750 ml. bottle of 15\% wine - 4.5\textsterling (around 5,6\euro).
\item Half liter of 40\% Vodka - 8\textsterling (around 10\euro).
\end{itemize}

To add to the relevancy of the question, a product example is provided on figure \ref{beer_joint}: 4x440ml of Bavaria beer 2.8\% lager beer has suggested minimum price is 1.97. On 25.10.2013, prices were varying between 1.79 and 2.99\textsterling. \footnote{Prices are taken from http://www.mysupermarket.co.uk}. Therefore, the suggested limitation falls in the relevant area - some sell above and some sell below, with average price above the limitation.

\begin{figure}
 \centering      
 \includegraphics[width=90mm]{}
\caption{Price of 4 pack Bavaria Lager beer from mysupermarket.co.uk}
  \label{beer_joint}
\end{figure}

%Minimal price was rarely applied in the past, and are a new tool applied by governments. This article wishes to explore further application of minimum pricing on goods. To do so, we take one of the most popular consumer search models, the Stahl model introduced by Stahl in \cite{Stahl89}. In it a minimum price is introduced, and its implications are analyzed. Note that we take the case when the minimum price is imposed in the interesting price scope, namely, it is sufficiently high not to allow all old equilibrium prices, yet is below the expected price in the market, as the example from figure \ref{beer_joint}.

The existing empiric literature, for example, Stockwell et al. \cite{minprice} and by Meng et al. \cite{report}, imply that introducing minimum price would increase the expected prices of the relevant goods. However, there is one main aspect to look at. The minimum price imposed by the Scottish legislator is not necessarily higher than the previous market price of the good, as noted by Meng et al. \cite{report} table 2.1. In case when the limitation is set above market price, it is clear that prices would rise, as a result of such step. Similarly, if the minimum price is set too low it would have no effect on pricing as there is no need for anyone to adjust price. However, this paper wishes to investigate what happens when minimum price is set moderately. This implies setting the minimum price at some existing market price. The question asked here is what happens in the interim case, where a minimum price is set at price levels which are in equilibrium support, as seen in the example at figure \ref{beer_joint}. The model analyzed here has mixed strategy equilibrium. Therefore, it leaves a range of prices in the relevant interval.

This paper takes a deeper look into introducing lower bounds on price, keeping positive profit for sellers in order to keep them on the market. This is done by analyzing a popular search model, and comparing its equilibrium to one where a lower bound price is introduced. The model analyzed is perhaps one of the most popular search models in the literature - the Stahl Search model, introduced by Stahl \cite{Stahl89}. This simple model has good tractable solution, and therefore has a wide literature. Moreover, Janssen et al. \cite{StahlOkEmp} find that this model indeed predicts product pricing of many goods, and finds that pricing of 86 from the 87 tested products is in line with the Stahl Model. Additionally, Baye et al. \cite{ShoppersExplained} points out that the consumer types suggested by the model are empirically profound. Therefore, such an important model provides a good indicator for checking introduction of lower bounds on prices.

\subsection{Results Overview}

At the first glance, the findings are rather surprising - when the lower bound is chosen at a certain level, not only that prices sink, but selecting the lowest allowed price by all sellers is equilibrium. In an example it reduces the expected price offered (EPO) by approximately 13\%! Most importantly, the minimum price minimizing the EPO depends only on parameters of consumer side, and will not be affected by entry of additional sellers. This would keep the lower bound fixed trough changes on the supply side. Additionally, in the case of two sellers it would be a unique equilibrium. Moreover, informed consumers (shoppers) are indifferent and all the benefits are to the uninformed consumers (searchers). 

Comparative statics suggest that information drives the results. The higher the share of searchers the larger is the difference to the original Stahl model equilibrium. Those receive a credible signal on what are the prices on the market, and take it into consideration when checking prices. Note that since the minimum price ensures positive profit for sellers, it is profitable for sellers to sell already at the minimum price. This would not hold if the minimum price would be replaced with a governmental tax.

When the minimum price is set sufficiently high the structure of the equilibrium is a pure one, at the minimum price level. This implies that the entire market has the same minimum price, in the spirit of Bertrand competition (for more on Bertrand competition see, for example, Baye and Morgan \cite{BertrandEQ}). Minimum price at lower levels will impose a mass point at the minimum price, and some continuous distribution with lower prices than in the original Stahl model equilibrium. The implication is clear - as long as the minimum price is not set sufficiently high or sufficiently low it will reduce market prices, instead of increasing them. Moreover, pricing competition would be reduced since all sellers would use the same pure strategy.

The reasoning behind such behavior is simple. When the limitation is set at sufficiently high levels, but below the expected price in equilibrium, the signal is credible. Searchers believe this limitation plays a significant role, and as a result prices would be at the bound level or close to it. As a result, also the searchers receive a signal on what are the prices in the economy, and expect the bound price to be offered. Thus, searchers set their reserve price sufficiently close to the bound price. Sellers anticipating such a behavior would find it profitable to offer the bound price purely, in order to compete for the shoppers, in a Bertrand competition. Therefore, pure equilibrium at the bound price prevails. When the limitation is set at lower levels this effect diminishes. Now searchers believe that sellers are less probable to be affected by such legislation. As a result, reserve price would be set at a higher level. From sellers' perspective, a mass point on the bound level would be set, with some probability mass above it, but EPO remains below the one in original Stahl model.

As a result the expected price in the market drops when such minimum price is introduced in a market with two or three sellers, and according to Stahl \cite{Stahl89}, also could hold for a larger number of sellers. This is due to a conjecture from Stahl \cite{Stahl89} that reserve price (and thus also EPO) rises with the number of sellers. A table with an example for such increase is provided in Stahl \cite{Stahl89}. The logic for such increase is simple - more consumers reduce the chance to be the cheapest seller, and therefore decrease the motivation to offer lower prices.

This paper suggestion is clear: one should be careful when setting a minimum price, as it can reduce prices and actually increase demand. Consider the example in figure \ref{beer_joint}. If the legislator sets the lowest price at 1.97 \textsterling, searchers would be aware of this limitation. They would expect the price not to be too far above the limitation. Therefore, 2.99\textsterling, offered by Sainsbury's would not be an acceptable price for searchers. The seller anticipating it would need to reduce price closer to the limitation price.

%This paper will firstly introduce the Stahl model. Then a lower price bound is introduced and conditions for a price lowering equilibrium are provided. Then an example is shown where these conditions hold, and a numeric analysis of the new equilibrium. 

The structure of the paper is as follows: First the Stahl model is introduced, and relevant results on it are provided. Then, minimal price is imposed on the Stahl model, followed by results and an example where prices are lower due to such a step. Afterwards, the specific case with two sellers is looked upon. Lastly, some additional characteristics of the price reducing equilibrium are provided, followed by a short discussion. Larger proofs are shifted to the appendix.

\section{Stahl Model}

The Stahl model, as introduced in Stahl \cite{Stahl89} is formally described below. Notation was adjusted to the recent literature on the Stahl model.

There are N sellers, selling an identical good. Each seller owns a single store. Production cost is normalized to 0, and assume that seller can meet demand. Additionally, there are consumers, each of whom wishes to buy a unit of the good, evaluating it at some high $M$. The mass of consumers is normalized to 1. This implies that there are many small consumers, each of which is strategically insignificant.

The sellers are identical, and set their price once at the first stage of the game. If the seller mixes then the distribution is selected simultaneously, and only at a later stage the realizations take place.

The consumers are of two types. A fraction $\mu$ of consumers are shoppers, who know where the cheapest price is, and they buy at the cheapest store. In case of a draw they randomize over all cheapest stores, uniformly. The rest are searchers, who sample prices. Sampling price in the first, randomly and uniformly selected, store is free. It is shown in Janssen et al. \cite{FirstCosts} that if it is not the case then some searchers would avoid purchase and the rest would behave as in the original model. If observed price is satisfactory - the searcher will buy there. However, if the price is not satisfactory - the searcher will go on to search in additional stores sequentially, where each additional visit has a cost $c$. The second (or any later) store is randomly selected from the previously unvisited stores, uniformly. The searcher may be satisfied, or search further on. When a searcher is satisfied, she has a perfect and free recall. This implies she will buy the item at the cheapest store she had encountered, randomizing uniformly in case of a draw. The searchers have an endogenous reserve price, $P_M$, which determines when they are satisfied.

The consumers need to be at both types (namely, $0<\mu<1$). If there are only shoppers - it is the Bertrand competition setting, (see  Baye and Morgan \cite{BertrandEQ}), and if there are only searchers the Diamond Paradox (Diamond \cite{Diamond}) is encountered, both well studied.

Before going on, make a technical assumption on the model. In order to avoid measure theory problems it is assumed that mixing is possible by setting mass points or by selecting distribution over full measure dense subsets of intervals. This limitation allows all of the commonly used distributions and finite combinations between such.

Additionally, a couple of very basic results are introduced:

\begin{itemize} 
\item Sellers cannot offer a price above some finite bound $M$. This has the interpretation of being the maximal valuation of a consumer for the good.
\item Searchers accept any price below $c$. The logic behind it is any price below my further search cost will be accepted, as it is not possible to reduce the cost by searching further.
\end{itemize}

\subsection{Game Structure}

The game is played between the sellers, searchers and the shoppers. The time line of the game is as follows:

At the first stage, sellers set their pricing strategies and simultaneously searchers set their common reserve price. Then, if some sellers used a mixed strategy, realization of mixed strategies is taking place. At the second stage shoppers observe all the realized prices and purchase item at the cheapest store. At the third stage, searchers sample a price randomly (uniformly) selected. If a searcher is satisfied she purchases the item. If not - she pays $c$ and observes another price, and so forth until ether she is satisfied or sampled all prices. When the searcher observed all stores and observed only unsatisfactory prices she would buy at the cheapest store encountered.

When reserve price and pricing strategies are being determined the knowledge of the various agents of the game is as follows:
\begin{itemize}
\item Sellers have beliefs regarding the reserve price set by searchers
\item Searchers have beliefs about which pricing strategies were actually played by the sellers.
\item Shoppers will know the real price in each store in the moment it is realized.
\end{itemize}

The probability that seller $i$ sells to shoppers when offering price $p$ is denoted $\alpha_i(p)$. Let $q$ denote the expected quantity that seller $i$ sells when offering price $p$. The expected quantity sold by a seller consists of the expected share of searchers that will purchase at her store, plus the probability she is the cheapest store multiplied by fraction of shoppers ($\mu$), and is also the expected market share of the seller.

Note that the reserve price ensures that the searcher will purchase at the last visited store, unless all stores were searched.

\subsection{Utilities and Equilibrium}

As the sellers set their price at the start of the game, they examine the expected utility that would be obtained due to their pricing strategy. Searcher decision whether to purchase the item or search further is taken on the later step of the game, when exact utility is available. Therefore, sellers have an ex ante expected utility, whereas searchers and shoppers have ex post utility, as follows:

\begin{itemize}
\item Seller utility is price charged multiplied by expected quantity sold.
\item Consumer utility is a large constant $M$, from which item price and search costs are subtracted.
\end{itemize}

%Note that since sellers make their decision at the first stage of the game their utility can be expressed only in expected terms. Consumers know how much they pay for the good when they purchase it. Therefore, for searchers there is an exact expression for utility.

The NE of the game has a Bayesian structure, and 
is as follows:
\begin{itemize}
\item Searchers have a reserve price ($P_M$). 
\item Searchers beliefs coincide with seller strategies played.
\item Seller have beliefs regarding the reserve price which coincide with $P_M$.
\item Reserve price is rational for the searchers
\item No seller can unilaterally adjust the pricing strategy and gain profit in expected terms.
\end{itemize}

\begin{remark}
As the sum of the searcher and seller utilities may differ only in the search cost, any strategy profile where all searchers always purchase the item at the first store visited is socially optimal.
\end{remark}

The original article also pins down the unique symmetric equilibrium of the model:

\begin{theorem}{Stahl, (89') \cite{Stahl89}}\\
In the model described above exists a unique symmetric NE, where all sellers use a continuous distribution function $F$ on the support between some $P_L$ and $P_M$, which is the reserve price.
\end{theorem}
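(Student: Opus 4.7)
The plan is to characterize any symmetric equilibrium by a sequence of elimination arguments, solve in closed form for the mixed distribution and the reserve price $P_M$ from two indifference conditions, and then verify that the resulting candidate is indeed an equilibrium while the derivation itself forces uniqueness.

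First I would rule out symmetric pure-strategy equilibria. If all sellers charged a common $p>0$, any single seller could undercut infinitesimally, capture the entire shopper mass $\mu$ at negligible revenue loss, and strictly gain. If $p=0$, any seller could deviate upward to the searchers' reserve price and earn the first-visit searcher share $(1-\mu)/N$ at a positive price. Hence every symmetric equilibrium must be in proper mixed strategies, with a common distribution $F$.

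Next I would pin down structural properties of $F$. \emph{No atoms}: an atom at $p$ would let a competitor shift that mass to $p-\varepsilon$ and enjoy a discrete jump in the probability of being strictly cheapest (capturing all shoppers) at negligible revenue loss, a strict improvement. \emph{Connected support}: a gap $(a,b)$ would let a seller with mass at $a$ relocate it to $b-\varepsilon$ without changing its probability of being cheapest (since no competitor prices the gap), earning strictly more. So the support is an interval $[P_L,P_M]$. The upper endpoint must equal the reserve price, since $p>P_M$ is rejected by searchers (and never wins shoppers when $p>P_L$), while the top of the support cannot lie strictly below $P_M$ without being dominated by a small upward shift. Then the constant-profit condition across the support reads
\[
p\bigl[\mu(1-F(p))^{N-1}+(1-\mu)/N\bigr]=P_M(1-\mu)/N,
\]
where the right-hand side is obtained by evaluating at $p=P_M$ (so $F(P_M)=1$ kills the shopper term). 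Inverting gives an explicit formula for $F(p)$, and imposing $F(P_L)=0$ pins down $P_L$ in closed form in terms of $P_M$, $\mu$, and $N$. The reserve price $P_M$ is then fixed by the searcher indifference condition $\int_{P_L}^{P_M}(P_M-p)\,dF(p)=c$, stating that the marginal expected saving from one more search equals its cost.

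The main obstacle, I would expect, is the joint solvability and uniqueness of this coupled system: the seller-indifference equation expresses $F$ and $P_L$ as functions of $P_M$, while the searcher indifference involves $P_M$ and the whole $F$. The standard route is to show that the left-hand side of the searcher equation is strictly monotone in $P_M$ with appropriate boundary behavior (vanishing as $P_M\downarrow c$ and growing past $c$ as $P_M$ increases toward $M$), yielding a unique solution $P_M^\ast$. Once $(F,P_L,P_M)$ are in hand, existence follows by checking that deviations above $P_M$ earn zero (no sale to searchers, no chance at shoppers) and deviations below $P_L$ earn strictly less than the equilibrium profit. Uniqueness among symmetric equilibria is then automatic, because every step of the construction was forced by necessity rather than chosen.
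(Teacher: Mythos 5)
Your proposal is correct and follows essentially the same route as the paper, which does not reprove this result but cites Stahl (1989) and then reproduces exactly the ingredients you derive: the equal-profit condition $p[\mu(1-F(p))^{N-1}+(1-\mu)/N]=P_M(1-\mu)/N$, the resulting closed forms for $F$ and $P_L$, and the reserve-price condition $P_M=E(F)+c$ (your integral $\int(P_M-p)\,dF=c$ is the same statement). Your added elimination steps (no pure symmetric equilibrium, no atoms, no gaps, top of support at the reserve price) are the standard arguments from Stahl's original proof and are sound; the only minor imprecision is the boundary behavior of the searcher-indifference equation, which is in fact linear in $P_M$ (since $E(F)/P_M$ depends only on $\mu$ and $N$), making uniqueness of $P_M$ immediate.
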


An additional important result involves the equilibrium price distribution. Since a seller is indifferent between all strategies in support when mixing she must have equal profit for all prices she offers. Thus in the original model equilibrium the profit equality implies:
\begin{equation*}
P\left( (1-F(p)^{n-1}) \mu + \frac{1-\mu}{n} \right) = P_M\frac{1-\mu}{n}
\end{equation*}

From here follows that:
\begin{equation}\label{distr}
F= 1 - \sqrt[n-1]{\frac{1-\mu}{n\mu}(\frac{P_M}{P}-1)}
\end{equation}
Since $F(P_L)=0$ we get that:
\begin{equation*}
\frac{1-\mu}{n\mu}(\frac{P_M}{P_L}-1) = 1
\end{equation*}
Which in turn implies that:
\begin{equation*}
P_L = P_M \frac{1+(n-1)\mu}{1-\mu}
\end{equation*}

An additional connection exists between the reserve price $P_M$ and the distribution $F$:
\begin{lemma}
In a seller symmetric equilibrium the reserve price is exactly $c$ above the expected price of $F$.
\end{lemma}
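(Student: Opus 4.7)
The plan is to derive the reserve price from the searcher's indifference condition at the upper boundary of the support, then read off the claimed relation directly.

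First, I would recall the characterization of $P_M$ in a reservation-price search rule: $P_M$ is the (unique) price at which the searcher is exactly indifferent between buying now and paying $c$ to sample one more store. Strict inequalities determine behavior on either side, so the relevant equation is obtained at the indifference threshold. In a symmetric equilibrium, every seller draws from the common distribution $F$, whose support is contained in $[P_L,P_M]$; this is what turns the indifference condition into a clean formula.

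Next, I would compute both sides of the indifference. If the searcher has observed a price equal to $P_M$ and buys, her payment is $P_M$. If instead she searches once more, she incurs the additional cost $c$ and then observes a fresh draw $p'$ from $F$. Because $F$ is supported below or at $P_M$, we have $\min(P_M,p')=p'$ almost surely, and by perfect recall the searcher will buy at $p'$ (no need to return to the previous store). Hence the expected total cost from one more search equals $c+\mathbb{E}_F[p']$. Equating the two sides,
\begin{equation*}
P_M \;=\; c + \mathbb{E}_F[p'],
\end{equation*}
which rearranges to $P_M-\mathbb{E}_F[p']=c$, the claim.

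To make the argument airtight I would also justify that it is enough to compare "buy now" with "search exactly once more." This is the standard one-shot-deviation property for stationary sequential search with i.i.d.\ draws from a known distribution: the searcher's continuation problem after paying $c$ is identical in structure to her original problem, so the optimal policy takes a reservation form, and the reservation threshold is pinned down precisely by the one-step indifference. I would cite this optimal-stopping fact rather than re-derive it.

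The only mildly delicate point is handling the boundary: one must argue that the indifference is struck exactly at the top of the support and not, say, at an interior mass point. Since in equilibrium $F$ is continuous on $[P_L,P_M]$ (per the preceding theorem) and no seller would ever post above $P_M$ (such a price sells nothing to searchers), the relevant threshold coincides with $P_M$, and the computation above goes through without modification. I expect this boundary bookkeeping to be the main thing requiring care; everything else is immediate.
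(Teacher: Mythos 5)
Your proof is correct and follows essentially the same route as the paper: both pin down $P_M$ by the searcher's one-step indifference between accepting and paying $c$ for one more draw from $F$, giving $P_M = c + \mathbb{E}_F[p']$. The paper phrases the "$P_M - E(F) < c$" direction as a credibility/seller-best-response argument rather than invoking the optimal-stopping characterization, but the substance is identical.
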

\begin{proof}

Firstly note that in a symmetric equilibrium observed price does not reveal any additional information about prices in other stores. Therefore, the decision whether to search further or not depend only on the believed expected price of sellers. If this expected price is sufficiently lower than observed prices - search will go on. If not - the searcher would be satisfied. Most of the search literature remains in the symmetric world, and therefore, we look on symmetric equilibria.

If $P_M-E(F)$ would be above $c$ - searchers would not be satisfied with price offers of $P_M$, and search further. If it would be below $c$ - searchers would be better off accepting prices up to $E(F)+c$. Since sellers know this, they would offer $E(F)+c$ and expect the searchers to accept such price. The claim of searchers to reject prices below $E(F)+c$ is not credible.
\end{proof}

One can be surprised regarding such equilibrium structure, and mixed seller strategies. The economic intuition behind it is quite simple. Low prices have high probability to attract shoppers, and will have a higher market share. Higher price is attractive to extract information rent from searchers. In equilibrium these two motivations have equal weight, and therefore a mixed equilibrium prevails.

\section{Minimum Price}

Up to here we were discussing the original Stahl model with its known results. Now we wish to introduce a price limitation in the model.
Consider a legislator, who wishes to introduce a minimum price, below which sales are forbidden. For that let us consider the implications of such a step in the Stahl search model.

Let us denote the Stahl model with a minimum price limitation as the limited model. Additionally, fix $c, \mu$ and $N$. Let us reserve to $P_M$ as the reserve price in the corresponding original Stahl model equilibrium, and $F$ the equilibrium distribution. 

\begin{definition}
Let us define the price $P^*$ as follows:
\begin{equation*}
P^* = c \frac{1-\mu}{\mu}
\end{equation*}
\end{definition}

\begin{theorem}\label{pstar_thm}
Consider the Stahl model with the parameters $N,c$ and $\mu$. Let $P_C$ be a price weakly above $P^*$. Suppose that a minimum price of $P_C$ is imposed. In the limited model exists a unique pure strategy equilibrium, where all sellers select $P_C$ purely, and $P_C+c$ is the searcher reserve price.
\end{theorem}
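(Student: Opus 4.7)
I would split the argument into establishing existence and uniqueness among pure strategy profiles. For existence, fix every other seller at $P_C$ and searchers at reserve $P_M = P_C+c$, and check a representative seller's best response. Because the minimum price rules out any $p < P_C$, only upward deviations matter, and they fall into three types:
\begin{itemize}
\item $p = P_C$ yields revenue $P_C/N$, since the seller ties with every competitor for the shopper mass and is first-visited by each searcher with probability $1/N$ at a price weakly below $P_M$;
\item $p \in (P_C, P_C+c]$ yields revenue $p(1-\mu)/N$, because shoppers are entirely lost to the $N-1$ cheaper competitors while a first-visiting searcher (probability $1/N$) still accepts since $p \leq P_M$;
\item $p > P_C+c$ yields revenue $0$, as the searcher rejects upon arrival, continues to a competitor priced at $P_C \leq P_M$, and buys there under perfect recall.
\end{itemize}
The best upward deviation is thus $p = P_C+c$, so $P_C$ is a best response exactly when $P_C \geq (P_C+c)(1-\mu)$, which rearranges to $P_C\mu \geq c(1-\mu)$, i.e.\ $P_C \geq P^*$. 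Consistency of $P_M$ follows immediately from the earlier lemma: with every seller at $P_C$ the expected price equals $P_C$, so the rational reserve is $P_C + c$.

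For uniqueness among pure profiles $(p_1,\dots,p_N)$ with each $p_i \geq P_C$, my plan is first to rule out asymmetric profiles and then to rule out symmetric ones with common price $p > P_C$. The asymmetric case relies on a Bertrand-style undercut: the seller posting the strictly largest $p_{\max}$ attracts no shoppers, so her revenue is at most $(1-\mu)p_{\max}/N$ regardless of the exact induced reserve, because a searcher either visits her first (probability $1/N$) and may accept, or reaches her only via perfect recall after seeing a cheaper store, in which case she purchases elsewhere. Deviating to a price just below the current minimum (or down to $P_C$, whichever is feasible) captures the full shopper mass $\mu$ at a price close to $p_{\max}$, and a direct comparison shows this is strictly profitable whenever prices are not all equal. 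Once the profile is symmetric at some common $p > P_C$, the standard $\varepsilon$-undercut argument applies: a deviation to $p-\varepsilon \geq P_C$ raises revenue from $p/N$ to approximately $p(\mu + (1-\mu)/N)$, a strict improvement for any $\mu>0$. Hence the only pure equilibrium has all sellers at $P_C$, with reserve $P_C+c$ from the lemma.

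The main obstacle will be handling the off-equilibrium reserve in the asymmetric case of the uniqueness step, since the lemma tying $P_M$ to the expected price was derived in a symmetric setting and does not directly pin down how searchers behave after observing one of several distinct posted prices. My workaround is to bypass any exact characterisation by using only the coarse upper bound $(1-\mu)p_{\max}/N$ on the highest-priced seller's revenue, which holds uniformly over all possible reserve-price beliefs and all feasible continuation strategies; this ceiling is sharp enough to make the undercut strictly profitable without solving the full asymmetric sequential search problem, which would otherwise be the most delicate piece of the argument.
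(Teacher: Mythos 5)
Your existence argument coincides with the paper's own proof: the same three deviation classes, the same observation that the best upward deviation is $P_C+c$, and the same inequality $P_C \geq (P_C+c)(1-\mu)$ reducing to $P_C\mu \geq c(1-\mu)$, i.e.\ $P_C \geq P^*$. That half is fine. The paper disposes of uniqueness in a single sentence (``any price other than $P_C$ cannot have positive mass \dots\ due to undercutting''), so your attempt to spell out the asymmetric pure case goes beyond the source --- but as written it contains a genuine gap.

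The gap is the claim that the highest-priced seller's undercut is ``strictly profitable whenever prices are not all equal.'' When the other sellers' minimum already sits at $P_C$, strict undercutting is infeasible; deviating ``down to $P_C$'' only ties, so the deviator gets a share $\mu/k$ of the shoppers rather than all of $\mu$, and your coarse comparison can hold with equality. Concretely, take $N=2$, $\mu=1/3$, $P_C=P^*=2c$, and the profile $(p_1,p_2)=(2c,3c)$: a searcher who sees $3c$ knows the other store charges $2c$, so her gain from one more search is exactly $c$ and she (weakly) accepts; seller $2$ then earns $(1-\mu)p_2/2=c$, while matching $P_C=2c$ also earns $2c\cdot\tfrac12=c$. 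Seller $2$ has no strictly profitable deviation and your argument stalls. The profile still fails to be an equilibrium, but for a different reason: seller $1$, who retains the whole shopper mass at any price strictly below $3c$, strictly gains by raising her price toward the searchers' acceptance threshold ($4c/3 \to 2c$) --- the ``cheapest seller relaxes upward'' half of the Bertrand argument, which your write-up omits. There is also an internal slip pointing to the same confusion: undercutting to ``just below the current minimum'' yields a price near $p_{\min}$, not ``close to $p_{\max}$'' as you state. To close the uniqueness step you need both deviations: rule out a strictly-cheapest seller with slack via her upward deviation, and rule out symmetric profiles above $P_C$ via the $\varepsilon$-undercut you already give.
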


Proof shifted to the appendix.

\begin{remark}
Note that any equilibrium where $P_C>P_L$ must have mass points at $P_C$. The reason is simple - without mass points at $P_C$ there is no motivation to go below this price, as it already attracts shoppers with certainty. Therefore, such equilibrium would prevail also without a cutoff price. However, we know that such equilibrium must have $P_L$ in support which is not possible.
\end{remark}

This theorem provides the first important result. If the minimum price is set sufficiently high we will receive a pure equilibrium where all sellers select reserve price purely. Searchers treat such a signal as important for the market and anticipate that prices should be around the minimum price. Sellers anticipating it, and therefore do not offer prices higher than $P_C+c$. However, the share of shoppers is sufficiently high to make $P_C$ more attractive than $P_C+c$, making a pure equilibrium. This is since at $P_C$ also shoppers would purchase the item in my store, and at $P_C+c$ only searchers originally visiting my store would buy there.

%\begin{remark}
%This result is robust to heterogeneous sellers, and is independent in the number of sellers. Additionally, if the search costs are heterogeneous and bounded below some $c_m$, then the result will hold for any $P_C > c_m \frac{1-\mu}{\mu}$.
%\end{remark}

\subsection{Lower Prices}

An important question is whether price offers are higher or lower due to this limitation. This question is dealt with next.

\begin{definition}
Let us denote the ratio between the expected price offered (EPO) in the original model and $P^*$ as $\beta$. If $\beta >1$ the expected price offered in the limited model is lower than in the original model.
\end{definition}

The following lemma suggests a condition which ensures that the expected price offered in the limited model.

\begin{lemma}\label{beta_above_1_lemma}
The expected price offered in the limited model with minimum price of $P^*$ is lower than the one in the original model iff $P_M > c / \mu$.
\end{lemma}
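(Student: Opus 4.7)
The plan is to reduce the comparison of expected prices to a one-line algebraic inequality by substituting the two quantities whose values have already been pinned down. First, I would invoke the earlier reserve-price lemma, which gives $P_M = E(F) + c$ in any symmetric equilibrium of the original Stahl model; hence the EPO in the original model is exactly $P_M - c$. Second, I would apply Theorem \ref{pstar_thm} with $P_C = P^*$ (the hypothesis $P_C \geq P^*$ is satisfied with equality) to conclude that in the limited model the unique equilibrium is pure at $P^*$, so its EPO equals $P^* = c(1-\mu)/\mu$.

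With both EPOs in hand, the statement ``limited-model EPO is strictly below original-model EPO'' becomes
\begin{equation*}
\frac{c(1-\mu)}{\mu} \;<\; P_M - c.
\end{equation*}
Adding $c$ to both sides and collecting the left-hand side over the common denominator $\mu$ yields $c/\mu < P_M$, which is precisely the condition $P_M > c/\mu$. Each manipulation in the chain is an equivalence, so the ``iff'' follows in both directions without extra work.

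The only step that requires even a moment of thought is confirming that Theorem \ref{pstar_thm} genuinely covers the boundary case $P_C = P^*$, since the weak inequality in its hypothesis is what allows us to identify the limited-model EPO with $P^*$ itself rather than with the mean of some non-degenerate distribution sitting above it. Once that observation is recorded, the content of the lemma is purely arithmetic, and I do not anticipate any obstacle beyond this bookkeeping.
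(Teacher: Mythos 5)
Your proposal is correct and follows essentially the same route as the paper: both identify the original-model EPO as $P_M-c$ via the reserve-price lemma, identify the limited-model EPO as $P^*=c(1-\mu)/\mu$ via the pure equilibrium of Theorem \ref{pstar_thm}, and reduce the claim to the algebraic equivalence $c(1-\mu)/\mu < P_M-c \iff P_M > c/\mu$ (the paper merely phrases this as the ratio $\beta$ exceeding one). Your explicit check that the weak inequality in Theorem \ref{pstar_thm} covers the boundary case $P_C=P^*$ is a point the paper leaves implicit, but the argument is the same.
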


\begin{proof}
Note that the expected price offered in the original model is given by $P_M-c$, and $P^*=c \frac{1-\mu}{\mu}$. Therefore:
\begin{equation*}
\beta=\frac{\mu(P_M-c)}{c(1-\mu)}
\end{equation*}
The expression is larger than one iff:
\begin{equation*}
\mu(P_M-c)>c(1-\mu)
\end{equation*}
after simplifying we obtain the required condition.
\end{proof}

Note that this condition involves an endogenous parameter $P_M$. When it could be explicitly found, a different, exogenous condition could be provided. However, this is a condition which may be true or false, depending on the parameters of the model. As noted by Stahl in \cite{Stahl89}, there is a monotone connection between $P_M$ and $\mu$. Lower $\mu$ (less shoppers) imposes higher reserve price, as the motivation to attract shoppers decrease.

As a conclusion, if in an equilibrium of the original Stahl Model it is the case that $P_M(N,\mu,c) > c/\mu$ then it is possible to reduce the payment of searchers by introducing a minimum price of $P^*$. Searchers sample a store randomly, and in expected terms will observe the expected price sellers offer. Therefore, searchers benefit from such a step. 

Note that if this condition does not hold, the reserve price is rather low in comparison to $P^*$, as $P^*$ would be above $P_M-c$. Only then any price limitation which can be possible would not reduce EPO. Therefore, initially the information rent was rather low. 

Below a numeric example with 2 sellers is provided. It is suggested in Table 1 in Stahl \cite{Stahl89} that the reserve price increases with $N$. Therefore, the case with 2 sellers probably has the lowest reserve price. The reason behind such a motivation is simple - when more sellers are in competition, it is harder to be the cheapest seller decreases. Therefore, there is less motivation to offer discounts, which drive the prices up.

\subsection{Example}

\begin{example}
Consider the Stahl model with 2 sellers, search costs of $c$ and $\mu=1/3$.
\end{example}

In the example $\mu=1/3$, such that $(1-\mu)/N=\mu$. Original model equilibrium distribution is $F(p)=2-P_M/P$ and $P_L=P_M/2$. The expected price of a seller would be then $E=P_M(ln2)$. Since $E+c=P_M$ we get that $c=P_M(1-ln(2)) \approx 0.3 P_M$.

Comparing the expected price offered and $c(1-\mu)/\mu=2c$ we get that such an equilibrium would be more profitable for searchers, as prices offered in the equilibrium with minimum price set at $c(1-\mu)/\mu$ are lower than $E$.

Applying the definition of $\beta$ it is visible that expected price offered is about 13\% less than in the original model equilibrium:
\begin{equation*}
\beta = \frac{P_M-c}{c (\frac{1-\mu}{\mu})} + \frac{P_M \ln(2)}{2P_M(1-\ln(2)} \approx \frac{69}{61} \approx 1.13
\end{equation*}

Therefore, in this case the condition holds and the expected price offered is indeed lower, and is lower by about 13\%. 

\section{Two Sellers}

In the case of two sellers it is possible to provide some additional results. The main reason behind it is the ability to calculate the expected value of original model equilibrium distribution $F$. Remember from equation (\ref{distr}), that in the general case it is:
\begin{equation*}
F= 1 - \sqrt[n-1]{\frac{1-\mu}{n\mu}(\frac{P_M}{P}-1)}
\end{equation*}
And there is no general explicit expression for the expected value of $F$. However, in the case of two sellers, $F$ looks as follows:
\begin{equation*}
F= 1 - {\frac{1-\mu}{2\mu}(\frac{P_M}{P}-1)}
\end{equation*}

In such case the expected value is given by:
\begin{equation*}
P_M \frac{1-\mu}{2\mu} \ln\left( {\frac{1+\mu}{1-\mu}} \right) 
\end{equation*}

Let us denote $\ln{\frac{1+\mu}{1-\mu}}$ as $\kappa$.

Since $E(F)+c=P_M$, we can obtain a value for the reserve price:
\begin{equation}\label{PM_2_explicit}
P_M = c \frac{2\mu}{2\mu-(1-\mu)\kappa}
\end{equation}

From here it is possible to obtain several additional results. Firstly, a lemma shows that for any two seller Stahl model introducing a low price bound would reduce prices.

\begin{lemma}\label{2sellercond1}
The condition from lemma \ref{beta_above_1_lemma} holds for the case of two sellers.
\end{lemma}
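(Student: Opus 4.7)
The plan is to substitute the explicit formula (\ref{PM_2_explicit}) for $P_M$ in the two-seller case into the condition $P_M>c/\mu$ from Lemma \ref{beta_above_1_lemma} and reduce it to a single elementary inequality in $\mu$.

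First I would clear the denominators. Using $P_M = c\,\frac{2\mu}{2\mu-(1-\mu)\kappa}$ with $\kappa=\ln\frac{1+\mu}{1-\mu}$, the inequality $P_M>c/\mu$ becomes, after canceling $c$ and using that the denominator $2\mu-(1-\mu)\kappa$ is positive (needed in any case so that $P_M$ is a sensible positive price), the equivalent statement
\begin{equation*}
2\mu^{2} > 2\mu - (1-\mu)\kappa,
\end{equation*}
which rearranges to $(1-\mu)\kappa > 2\mu(1-\mu)$, i.e.\ $\kappa > 2\mu$, i.e.
\begin{equation*}
\ln\frac{1+\mu}{1-\mu} \;>\; 2\mu \qquad \text{for all } \mu\in(0,1).
\end{equation*}

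The second step is to prove this elementary inequality. I would set $g(\mu):=\ln(1+\mu)-\ln(1-\mu)-2\mu$, note $g(0)=0$, and compute
\begin{equation*}
g'(\mu)=\frac{1}{1+\mu}+\frac{1}{1-\mu}-2=\frac{2}{1-\mu^{2}}-2=\frac{2\mu^{2}}{1-\mu^{2}}>0
\end{equation*}
for $\mu\in(0,1)$. Hence $g$ is strictly increasing on $(0,1)$ and $g(\mu)>0$ there, giving the desired inequality.

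As a side check I would also verify that the denominator $2\mu-(1-\mu)\kappa$ is indeed positive on $(0,1)$, so that multiplying through was valid and $P_M$ is well defined. Setting $h(\mu):=\tfrac{2\mu}{1-\mu}-\kappa$, one has $h(0)=0$ and $h'(\mu)=\frac{2}{(1-\mu)^{2}}-\frac{2}{1-\mu^{2}}=\frac{4\mu}{(1-\mu)^{2}(1+\mu)}>0$, so $h>0$ and equivalently $(1-\mu)\kappa<2\mu$, confirming positivity of the denominator. Combining this with the strict inequality $\kappa>2\mu$ established above yields $P_M>c/\mu$, and by Lemma \ref{beta_above_1_lemma} the EPO in the limited model with minimum price $P^{*}$ is strictly below the EPO in the original model. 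The only real obstacle is the inequality $\ln\frac{1+\mu}{1-\mu}>2\mu$, but this is a routine monotonicity argument, so the proof should be short.
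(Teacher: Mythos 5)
Your proposal is correct and follows essentially the same route as the paper: both reduce $P_M>c/\mu$ via the explicit two-seller formula to the elementary inequality $\kappa=\ln\frac{1+\mu}{1-\mu}>2\mu$ and establish it by noting equality at $\mu=0$ and comparing derivatives, $\frac{2}{1-\mu^2}>2$ on $(0,1)$. Your additional check that the denominator $2\mu-(1-\mu)\kappa$ is positive is a sensible extra step the paper leaves implicit, but it does not change the argument.
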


Proof shifted to appendix.

\begin{remark}
Stahl \cite{Stahl89} argues in table (1) that increasing the number of sellers increases the reserve price (for fixed $\mu, c$). If this is true, then introducing $P^*$ will have the positive implication for searchers, and the pure equilibrium will hold.
\end{remark}

\begin{proposition}\label{unique_2sel}
Consider a two seller Stahl model, with a limitation is imposed at price $P^*$. In such case the equilibrium, as shown in theorem \ref{pstar_thm} is unique, and no additional mixed or asymmetric equilibria exist.
\end{proposition}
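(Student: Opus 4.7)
My plan is to enumerate the possible forms a second equilibrium could take — symmetric mixed, asymmetric pure, asymmetric mixed — and to rule each of them out, leaving only the pure equilibrium of Theorem \ref{pstar_thm}.

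For the symmetric case let $F$ be the common CDF with support in $[\underline{p},\bar{p}]\subseteq[P^*,P_M']$ and $P_M'$ the searcher reserve price. Three standard observations apply: (i) $\bar{p}=P_M'$, since otherwise a seller profitably raises her top offer slightly while still being accepted by the searchers who visit her first; (ii) mass points strictly above $P^*$ are ruled out by the usual $\varepsilon$-undercut argument, which captures the mass's shoppers with certainty and gives a discrete jump in profit; (iii) on a continuous piece the indifference $p[\mu(1-F(p))+(1-\mu)/2]=P_M'(1-\mu)/2$ together with $F(\underline{p})=0$ and Stahl's self-consistency $P_M'=E(F)+c$ reproduces the original Stahl distribution, so $P_M'=P_M$ and $\underline{p}=P_L$. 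But a two-seller calculation in the spirit of Lemma \ref{2sellercond1} — using the series $\kappa=2\sum_{k\ge 0}\mu^{2k+1}/(2k+1)<2\mu/(1-\mu^2)$, which rearranges to $(1-\mu^2)\kappa<2\mu$ — gives $P_L<P^*$, contradicting $\underline{p}\ge P^*$. Finally, if $F$ carries a mass $q>0$ at $P^*$ together with any continuous piece above it, profit at $P^*$ exceeds the limit of profit from just above by $P^*\mu q/2$, so the seller strictly prefers to concentrate all weight on $P^*$ and we return to the pure strategy of Theorem \ref{pstar_thm}.

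For the asymmetric case a Bertrand-top argument forces both upper endpoints $\bar{p}_A,\bar{p}_B$ to coincide: if $\bar{p}_A<\bar{p}_B$, then $B$'s profit on the gap $(\bar{p}_A,\bar{p}_B]$ equals $p(1-\mu)/2$, which cannot be constant across a continuous portion of $B$'s support, while any mass $B$ places above $\bar{p}_A$ creates an undercutting incentive for $A$ that contradicts the optimality of $A$'s own top. With $\bar{p}_A=\bar{p}_B=\bar{p}$, matching equilibrium profits at the top yields $K_A=K_B=\bar{p}(1-\mu)/2$, and substituting back into the two indifference equations gives $F_A=F_B$; so every asymmetric candidate collapses to a symmetric one and the previous paragraph completes the argument. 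Asymmetric pure is a trivial sub-case: for $p_A<p_B$ the low seller strictly prefers $p_B-\varepsilon$ to $p_A$, so no such pair can persist.

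The main obstacle is that in an asymmetric candidate the searcher's reserve price depends on which store she visited first, so the two reserves $R_A$ and $R_B$ (for searchers visiting $A$ and $B$ respectively) can a priori differ and the single $P_M'$ used in the symmetric analysis is not directly available. My plan is to handle this by first showing that each seller's top $\bar{p}_i$ must equal the reserve price of the searchers who first visit her (otherwise she can profitably raise), and then exploiting that the equality of equilibrium profits at the common top forces $R_A$ and $R_B$ to coincide, restoring a single $P_M'$ to which the symmetric reasoning applies.
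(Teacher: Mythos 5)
Your decomposition and most of the preliminary steps are sound. In particular, the observation that a symmetric mixed candidate with no atom at $P^*$ must reproduce the original Stahl equilibrium, whose lower endpoint $P_L=P_M(1-\mu)/(1+\mu)$ lies strictly below $P^*$ (your series bound $(1-\mu^2)\kappa<2\mu$ is correct), matches the paper's own remark that any equilibrium with $P_C>P_L$ must place mass on $P_C$. The asymmetric reduction is sketchier, but that is not where the real difficulty lies.

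The genuine gap is in your last step for the symmetric case. From the fact that profit at $P^*$ exceeds the limit of profit from just above $P^*$ by $P^*\mu q/2$, you conclude that the seller ``strictly prefers to concentrate all weight on $P^*$.'' That does not follow. The discontinuity only shows that the continuous piece cannot begin at $P^*$ itself; it forces a gap in the support, i.e.\ a candidate equilibrium with an isolated atom of mass $\rho$ at $P^*$ and a continuous piece on a separated interval $(P_N,P_M)$ with $P_N>P^*$. On that interval the margin extracted from searchers is higher, so indifference between the atom and the continuous piece is perfectly consistent at the level of seller profits: it merely pins down $P_M=P^*(1+\mu-\mu\rho)/(1-\mu)$. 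This one-parameter family of candidates is exactly what the paper's proof spends its effort eliminating: it combines the seller indifference condition with the searchers' reserve-price rationality $E(F)=P_M-c$ and the definition $P^*=c(1-\mu)/\mu$ to reduce everything to a single equation (\ref{no_ne_eq}) in $(\mu,\rho)$, and then argues (graphically/numerically) that this equation has no solution on $(0,1)^2$. Without an argument of this kind, your proof does not establish uniqueness; the case you dismissed in one line is the only non-trivial one.
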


Proof of proposition is shifted to the appendix.

A very important question is whether shoppers benefit or not from such equilibrium. The surprising answer is that when the lowest possible they are indifferent when $P^*$ is selected.

\begin{lemma}
Let $N=2$. Suppose sellers mix independently. Then the expected offer the shoppers observe in orig. equilibrium is exactly equal to $P^*$.
\end{lemma}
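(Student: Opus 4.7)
The quantity to compute is $E[\min(P_1,P_2)]$, where $P_1$ and $P_2$ are independent draws from the symmetric equilibrium distribution $F$ on $[P_L,P_M]$. My plan is to bypass any direct integration against $F$ and instead use an aggregate-accounting identity: total equilibrium revenue of the sellers must equal total equilibrium spending of the consumers.

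For the revenue side, seller indifference pins down each seller's expected revenue as her payoff at the upper endpoint $P_M$, where she only sells to the mass $(1-\mu)/2$ of searchers who visit her first (shoppers go elsewhere with probability one, as $F$ is continuous at $P_M$). Summing over the two sellers gives total revenue $P_M(1-\mu)$. For the spending side, I decompose by consumer type. Every price in the support of $F$ lies weakly below the reserve price $P_M$, so a searcher accepts the first price she samples and pays $E(F)$ in expectation; total searcher spending is therefore $(1-\mu)E(F)$. Independence of the two realizations gives total shopper spending $\mu E[\min]$. Equating,
\begin{equation*}
(1-\mu)E(F)+\mu E[\min]=P_M(1-\mu),
\end{equation*}
and substituting $E(F)=P_M-c$ from the earlier lemma yields $\mu E[\min]=(1-\mu)c$, i.e. $E[\min]=c(1-\mu)/\mu=P^*$.

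The delicate step is justifying that a searcher never triggers additional search on path; this follows because the support of $F$ is contained in $[P_L,P_M]$ with $P_M$ exactly the reserve price, so every observed price is acceptable and the searcher never pays the cost $c$. The independent-mixing hypothesis is used both to conclude that a seller at $P_M$ attracts zero shoppers with probability one, and to justify writing shopper spending as $\mu E[\min]$ with $\min$ drawn from two independent copies of $F$. A more laborious alternative would evaluate $P_L+\int_{P_L}^{P_M}(1-F(p))^2\,dp$ directly from equation (\ref{distr}) and then invoke the closed form (\ref{PM_2_explicit}) for $P_M$; the factor $2\mu-(1-\mu)\kappa$ emerging from the integral cancels against the denominator of $P_M$, again reproducing $P^*$. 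The accounting route is much cleaner and, incidentally, does not rely on $N=2$ at all.
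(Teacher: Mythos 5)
Your proof is correct, and it takes a genuinely different route from the paper's. The paper argues by brute force: it writes down the distribution of the minimum, $G=1-(1-F)^2$, integrates to get $E(G)=2\bigl(\tfrac{1-\mu}{2\mu}\bigr)^2P_M\bigl(\tfrac{2\mu}{1-\mu}-\kappa\bigr)$, and then substitutes the closed form (\ref{PM_2_explicit}) for $P_M$ so that the factor $2\mu-(1-\mu)\kappa$ cancels --- exactly the ``laborious alternative'' you sketch at the end (note the paper's final line reads $E(G)=c\mu/(1-\mu)$, which is a typo for $c(1-\mu)/\mu=P^*$; the computation itself checks out). Your accounting identity replaces all of this with the indifference condition: each seller's equilibrium profit equals her profit at the top of the support, $P_M\tfrac{1-\mu}{N}$, so total revenue is $P_M(1-\mu)$, and since searchers never search twice on path they contribute $(1-\mu)E(F)=(1-\mu)(P_M-c)$, forcing $\mu E[\min]=(1-\mu)c$. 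What this buys is substantial: no integration, no need for the explicit form of $F$ or of $P_M$, and --- as you observe --- no dependence on $N=2$. In particular your argument also delivers, for free, the three-seller lemma later in the paper (``Shoppers expected price is exactly $P^*$''), which the paper again handles by a separate explicit computation of $1-(1-F)^3$, and it explains *why* the answer is $P^*=c\tfrac{1-\mu}{\mu}$ rather than having it emerge from a cancellation: the shoppers' aggregate payment is exactly the searchers' aggregate information rent $(1-\mu)c$ spread over the mass $\mu$ of shoppers. The only points requiring care --- that the seller at $P_M$ attracts no shoppers because $F$ is atomless, and that searchers accept the first quote with probability one --- are both addressed in your write-up.
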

\begin{proof}
This follows directly from the expected value of minimum value from two iid variables with the distribution $F=1 - {\frac{1-\mu}{2\mu}(\frac{P_M}{P}-1)}$. Remember that $\min(F_1,F_2)$ is distributed with $1-(1-F)^2$.

Then, the expected price shoppers' encounter is given by the distribution:
\begin{equation}
G(p)=1-(\frac{1-\mu}{2\mu}\frac{P_m}{p}-1)^2
\end{equation}
The expected value of such a distribution is:
\begin{equation}
E(G)=2(\frac{1-\mu}{2\mu})^2P_M(\frac{2\mu}{1-\mu}-\log(\frac{1+\mu}{1-\mu}))
\end{equation}
Setting the value of $P_M$ from equation (\ref{PM_2_explicit}) into the last equation leads to $E(G)=c\mu/(1-\mu)$.
\end{proof}

The two seller case allows us to perform some additional analysis. When $P_C < P^*$ we have equilibria of a different form, as given by corollary \ref{pcbelow}. There, sellers set a mass point with mass $\rho(P_C)$ on the minimum price $P_C$, and a continuous distribution over a parameter and $P_C$ dependent interval $(P_N,P_M)$ where $P_N$ is strictly larger than $P_C$. An important result for comparative statics is below, explaining how does the reserve price (and thus also the expected price offered which is $c$ below $P_M$) change:

\begin{lemma}\label{pmrholemma}
In such equilibria $P_M$ is decreasing in $\rho$.
\end{lemma}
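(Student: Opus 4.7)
The plan is to exploit the two equilibrium identities—seller profit indifference across the support, and the searcher's reserve price condition $P_M = c + E[\text{price}]$—to eliminate $P_C$ and write $P_M$ as an explicit decreasing function of $\rho$ alone.

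First I would compute the seller's expected profit at the atom $P_C$ and at the upper endpoint $P_M$. At $P_C$ the opponent atoms on $P_C$ with probability $\rho$ (shoppers split) and is strictly above with probability $1-\rho$ (all shoppers captured), giving profit $P_C[(1-\rho/2)\mu + (1-\mu)/2]$; at $P_M$ only the captive share $(1-\mu)/2$ of searchers buys. Equating the two yields
\begin{equation*}
P_M = P_C\cdot\frac{1+(1-\rho)\mu}{1-\mu},
\end{equation*}
and indifference at the lower endpoint of the continuous part gives $P_M/P_N = (1+\mu-2\rho\mu)/(1-\mu)$. Reading off the density on $(P_N,P_M)$ from the interior indifference $P[(1-\rho)(1-F(P))\mu + (1-\mu)/2] = P_M(1-\mu)/2$, integrating, and substituting into the reserve-price identity $P_M = c + \rho P_C + (1-\rho)E[\text{continuous part}]$ with $P_C$ eliminated via the first identity produces $P_M = c/\phi(\rho)$, where
\begin{equation*}
\phi(\rho) = 1 - \frac{\rho(1-\mu)}{1+\mu-\rho\mu} - \frac{1-\mu}{2\mu}\ln\left(\frac{1+\mu-2\rho\mu}{1-\mu}\right).
\end{equation*}

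Monotonicity of $P_M$ in $\rho$ is then equivalent to $\phi'(\rho)>0$. A direct calculation gives
\begin{equation*}
\phi'(\rho) = \frac{1-\mu}{1+\mu-2\rho\mu} - \frac{1-\mu^2}{(1+\mu-\rho\mu)^2},
\end{equation*}
whose sign is that of $(1+\mu-\rho\mu)^2 - (1+\mu)(1+\mu-2\rho\mu) = (\rho\mu)^2 \geq 0$, strict for $\rho>0$, which yields the claim.

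The main obstacle I would expect is the bookkeeping for the opponent's total CDF—mixing the atom at $P_C$ with the continuous density on $(P_N,P_M)$—when writing down the interior density and the overall expected offer; a dropped $(1-\rho)$ factor there would contaminate $\phi$ and break the clean $(\rho\mu)^2$ cancellation that drives the result. Once the implicit form $P_M = c/\phi(\rho)$ is secured, the rest is a short one-variable calculus check.
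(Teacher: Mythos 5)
Your proposal is correct, and it reaches the conclusion by a genuinely sharper route than the paper. The paper's own proof also starts from the expected-price expression (its equation (\ref{expval_lowpc})), but then argues indirectly: hold $P_M$ fixed, assert (by pointing to the numerically generated Figure \ref{expval}) that $E$ falls in $\rho$, and conclude that $P_M$ must fall to preserve $E=P_M-c$. You instead eliminate $P_C$ and $P_N$ through the profit-indifference identities $P_M/P_C=(1+\mu-\rho\mu)/(1-\mu)$ and $P_M/P_N=(1+\mu-2\rho\mu)/(1-\mu)$ (both of which I verified and which match the expressions in the paper's uniqueness proof), obtaining the closed form $P_M=c/\phi(\rho)$ and proving $\phi'>0$ by hand via the cancellation $(1+\mu-\rho\mu)^2-(1+\mu)(1+\mu-2\rho\mu)=(\rho\mu)^2$. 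This buys two things the paper's version lacks: it replaces a graphical verification with an analytic one, and it makes explicit the dependence of $P_C$ and $P_N$ on $\rho$, which the paper's ``differentiate $E$ holding $P_M$ constant'' step leaves implicit (and which is needed for that derivative to be well defined). Your derivation also silently corrects two typos in the paper's equation (\ref{expval_lowpc}), which as printed drops the factor $P_M$ on the logarithmic term and writes $P^*$ where the atom sits at $P_C$. Two small remarks: $\phi'(0)=0$, so the monotonicity is strict only for $\rho>0$ (harmless for the lemma); and your formula passes the natural consistency checks, since $\phi(0)=1-\tfrac{1-\mu}{2\mu}\kappa$ recovers the original Stahl reserve price of equation (\ref{PM_2_explicit}) while $\phi(1)=\mu$ gives $P_M=c/\mu$ and $P_C=P^*$, matching the pure equilibrium of Theorem \ref{pstar_thm}.
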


Proof shifted to the appendix.

Thus, increasing $P_C$ between $P_L$ up to $P^*$ increases $\rho$ and slowly decreases the EPO, up to a level where it reaches the lowest value at $P^*$. Higher values keep the pure equilibrium but with a higher EPO, as sketched in figure \ref{interval_col}.

\begin{corollary}\label{interval_col}
Consider the Stahl model with two sellers. For any minimum price in the interval $(P_L,P_M-C)$, the resulting EPO would be lower than in the original equilibrium.
\end{corollary}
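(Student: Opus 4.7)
The plan is to split the interval $(P_L, P_M-c)$ at the threshold $P^*$ and handle the two subintervals separately, then glue the two regimes continuously at $P^*$. Recall from the earlier lemma on reserve prices that in any symmetric equilibrium the EPO equals $P_M - c$, so in the original model the target value is $P_M - c$ and the claim becomes: for $P_C$ in the stated range, the limited-model EPO, call it $\mathrm{EPO}(P_C)$, is strictly below $P_M - c$.

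First I would dispose of the upper subinterval $[P^*, P_M-c)$. Here Theorem \ref{pstar_thm} applies directly: for any $P_C \geq P^*$ the unique equilibrium is the pure profile in which every seller charges $P_C$, so $\mathrm{EPO}(P_C) = P_C$. Since $P_C < P_M - c$ by assumption, the conclusion is immediate on this range, and at the right endpoint $P_C = P^*$ we obtain the minimal value $\mathrm{EPO}(P^*) = P^*$.

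Next, for the lower subinterval $(P_L, P^*)$ I would invoke the mass-point equilibria described in corollary \ref{pcbelow}: sellers place an atom of mass $\rho(P_C)$ at $P_C$ and distribute the remaining probability continuously on an interval $(P_N, P_M)$ with $P_N > P_C$. At $P_C = P_L$ the bound is not binding, so $\rho(P_L) = 0$ and the limited equilibrium coincides with the original Stahl equilibrium, yielding $\mathrm{EPO}(P_L) = P_M - c$. I would then argue that $\rho(P_C)$ is strictly increasing in $P_C$ on $(P_L, P^*)$. Combined with Lemma \ref{pmrholemma}, which states that $P_M$ is decreasing in $\rho$, and with the identity $\mathrm{EPO} = P_M - c$ valid in every symmetric equilibrium, this gives $\mathrm{EPO}(P_C) < P_M - c$ for every $P_C > P_L$.

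The main obstacle is establishing the comparative static $\rho'(P_C) > 0$ on $(P_L, P^*)$, since Lemma \ref{pmrholemma} only relates $P_M$ to $\rho$ and does not directly supply monotonicity in $P_C$. To get it, I would extract $\rho$ as an explicit function of $P_C$ from the seller's profit-indifference conditions, equating the expected profit from the atom at $P_C$ (where a seller wins all shoppers with probability $\rho/2 + (1-\rho)$ and captures the usual share of searchers) with the expected profit from prices in the continuous tail and with $P_M \cdot (1-\mu)/2$. Solving this system for $\rho(P_C)$ in the two-seller case should yield a clean expression that is manifestly increasing in $P_C$, rising from $0$ at $P_L$ to $1$ at $P^*$; once that is in hand, Lemma \ref{pmrholemma} converts the monotonicity into strict decrease of the EPO below the original value, and the two regimes glue together at $P^*$ to cover the whole interval $(P_L, P_M - c)$.
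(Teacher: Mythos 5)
Your proposal follows essentially the same route as the paper: split the interval at $P^*$, use Theorem \ref{pstar_thm} to get $\mathrm{EPO}(P_C)=P_C<P_M-c$ on $[P^*,P_M-c)$, and on $(P_L,P^*)$ combine the mass-point equilibria of Corollary \ref{pcbelow} with the monotonicity of $\rho$ in $P_C$ (the paper's Corollary \ref{pcbelow2}) and Lemma \ref{pmrholemma} to conclude $P_M$, and hence the EPO, falls below its original value, with continuity at $P_L$. The only difference is that you propose to establish $\rho'(P_C)>0$ by explicitly solving the profit-indifference system, which is if anything more careful than the paper's brief appeal to the monotonicity of the expression in its equilibrium equation.
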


Following corollary \ref{pcbelow2}, larger mass point at minimum price implies minimum price closer to $P^*$. Since it is strictly monotone, the equilibrium would be unique.
Note that there will be no jump at $P_L$, due to the continuous nature of the change from the original equilibrium when $\rho$ is small.
Now, it is possible to sketch equilibrium behavior as a function of the minimum price, as done in figure \ref{sketch} for the case of two sellers. Note that the slope between $P_L$ and $P^*$ can be of a different nature, as it is only a sketch.

\begin{figure}%\label{scetch}
 \centering
\includegraphics[width=100mm]{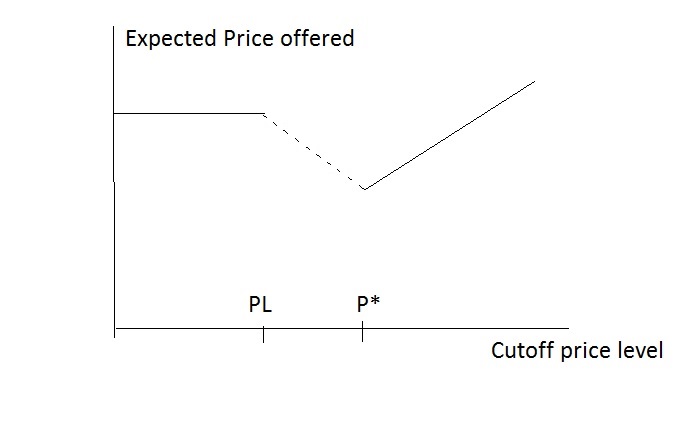}
  \caption{change in EPO as func. of minimum price(2 sellers)}
  \label{sketch}
\end{figure}

\subsection{Two Sellers Summary}

So, for two sellers some strong results are available. Firstly, when limitation is set at $P^*$, it is always a good idea to introduce a minimum price if consumers benefit is before the eyes of the decision maker. The resulting equilibrium is unique. If one is concerned about the shoppers - those are indifferent between the two possibilities and therefore, in total consumers are better off. 

Additionally, no matter where the minimum price is set, EPO would drop. It is not only the specifically picked $P^*$, but any price in the range $(P_L,P_M-c)$ that have such an effect. The effect is global over the entire range of prices. Any level of minimum price below original EPO  where it is not possible to use the original equilibrium will not increase market prices. Clearly, it is maximized at $P^*$, and is lower the further we are from it, but it still prevails. When the limitation is set above $P^*$ we will have a pure Bertrand equilibrium, where all sellers set the limitation price. When the limitation is at $p$, where $p<P^*$, then there would be a mass point at $p$ with some additional distribution mass between two prices strictly above $p$ and below original model reserve price.

\section{More than two sellers}

When more than two sellers are involved, the picture starts to be more complex. Firstly, asymmetric equilibria are possible, as noted by Astone-Figari and Yankelevitch \cite{AsymmetricSearch}. The comparison so far was on symmetric equilibria, and additional ones can be completely different than the ones introduced before.

However, even in the symmetric equilibrium difficulties arise. An additional complexity rises when one tries to calculate the expected value of the original model price distribution. Remember that from equation (\ref{distr}):
\begin{equation*}
F= 1 - \sqrt[n-1]{\frac{1-\mu}{n\mu}(\frac{P_M}{P}-1)}
\end{equation*}
Unfortunately, there is no general explicit expression for the expected value of $F$, which would cover all possible number of sellers. Therefore, most of the results cannot be proven for a general number of sellers.

\subsection{Three Sellers}

It is possible to calculate explicitly the expected value denoted in equation (\ref{distr}) when the number of sellers is three.

Using Matlab and calculating the reserve price $P_M$ for the case of 3 sellers, the following expression was obtained:
\begin{equation*}
\frac{E(F)}{P_M}=\arctan{\left(\sqrt{\frac{3\mu}{1-\mu}}\right)} \cdot \sqrt{\frac{1-\mu}{3\mu}}
\end{equation*}

Note that $E(F)=P_M-C$, and from here follows:
\begin{equation*}
P_M = \frac{c}{1-\frac{E(F)}{P_M}}=\frac{c}{1-\arctan{\left(\sqrt{\frac{3\mu}{1-\mu}}\right)} \cdot \sqrt{\frac{1-\mu}{3\mu}}}
\end{equation*}

As we have found, in two sellers case the reserve price was given by:
\begin{equation*}
c = P_M (1- \frac{1-\mu}{2\mu}\ln{\frac{1+\mu}{1-\mu}})
\end{equation*}

Figure \ref{pm3pm2} states the difference between reserve prices for a given $\mu$ with 2 and 3 sellers is provided below. It clearly shows that the reserve price for 3 sellers is higher for any $\mu \in (0,1)$. Additionally, lower $\mu$ increases the difference in reserve price. From here follows:
\begin{lemma}\label{3sellercond1}
For the case of 3 sellers imposing a lower bound on price at the level of $P^*$ would reduce the expected price offered. The percentage prices drop by slightly more than what was with 2 sellers.
\end{lemma}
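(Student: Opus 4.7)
The plan is to reduce the lemma to two tasks: (a) verify the hypothesis of Lemma \ref{beta_above_1_lemma} for three sellers, namely $P_M^{(3)} > c/\mu$, which will give that EPO drops; and (b) compare the percentage drop for three sellers with the one obtained for two sellers in Lemma \ref{2sellercond1}.

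For (a) I would work with the explicit three-seller reserve price derived just above the lemma,
\begin{equation*}
P_M^{(3)} \;=\; \frac{c}{1 - \arctan\!\left(\sqrt{\tfrac{3\mu}{1-\mu}}\right)\sqrt{\tfrac{1-\mu}{3\mu}}},
\end{equation*}
together with the two-seller expression in (\ref{PM_2_explicit}). Lemma \ref{2sellercond1} already gives $P_M^{(2)} > c/\mu$, and the comparison of reserve prices in Figure \ref{pm3pm2} shows $P_M^{(3)} > P_M^{(2)}$ on the whole interval $\mu \in (0,1)$. Chaining these two inequalities yields $P_M^{(3)} > c/\mu$, and Lemma \ref{beta_above_1_lemma} then supplies $\beta(3) > 1$, so imposing the floor $P^*$ strictly reduces EPO.

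For (b), I would observe that the ratio
\begin{equation*}
\beta(N) \;=\; \frac{\mu\bigl(P_M^{(N)} - c\bigr)}{c(1-\mu)}
\end{equation*}
from Lemma \ref{beta_above_1_lemma} is strictly increasing in $P_M^{(N)}$ for fixed $c,\mu$, and the associated percentage price reduction $1 - 1/\beta(N)$ is strictly increasing in $\beta(N)$. Since $P_M^{(3)} > P_M^{(2)}$ on $(0,1)$, it follows that $\beta(3) > \beta(2)$, and hence the percentage drop in EPO at $N=3$ exceeds the one at $N=2$, giving the ``slightly more'' part of the statement.

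The main obstacle is establishing $P_M^{(3)} > P_M^{(2)}$ analytically rather than by Figure \ref{pm3pm2} alone, i.e.\
\begin{equation*}
\arctan\!\left(\sqrt{\tfrac{3\mu}{1-\mu}}\right)\sqrt{\tfrac{1-\mu}{3\mu}} \;>\; \frac{1-\mu}{2\mu}\ln\frac{1+\mu}{1-\mu} \qquad \forall\,\mu \in (0,1).
\end{equation*}
Both sides tend to $1$ as $\mu \to 0^+$, so the inequality only becomes visible at second order via Taylor expansions of $\arctan$ and of $\ln\frac{1+\mu}{1-\mu}$; extending it across the full interval would require a monotonicity argument for the difference, or a substitution such as $t = \sqrt{3\mu/(1-\mu)}$ followed by careful bounding. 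Consistent with the style of the surrounding results, I would otherwise appeal to the numerical verification in Figure \ref{pm3pm2}, which also justifies the qualifier ``slightly'' in the lemma, since the gap $P_M^{(3)} - P_M^{(2)}$ is numerically small over most of the parameter range.
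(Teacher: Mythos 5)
Your proposal matches the paper's own argument: the paper likewise chains $P_M^{(3)} > P_M^{(2)} > c/\mu$, using Figure \ref{pm3pm2} for the first inequality and Lemma \ref{2sellercond1} for the second, and obtains the larger percentage drop from the fact that $P^*$ is independent of $N$ while the original EPO $P_M-c$ grows with $P_M$. The analytic gap you flag --- proving $P_M^{(3)} > P_M^{(2)}$ without the figure --- is not closed in the paper either, which simply declares the inequality ``immediate from figure \ref{pm3pm2}''.
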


Fix $\mu \in (0,1)$ and $c$. Let $P_2$ be the reserve price in the corresponding original Stahl model with 2 sellers, and $P_3$ with three sellers. Then, $P_3>P_2$.
This fact is immediate from figure \ref{pm3pm2}. 

Note that since $P^*$ is independent in the number of sellers and expected price offered is $P_M-c$, higher $P_M$ implies more beneficial equilibrium for searchers.

\begin{corollary}
The condition from lemma (\ref{beta_above_1_lemma}) holds also for three sellers.
\end{corollary}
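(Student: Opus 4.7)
The plan is to unfold the condition of lemma \ref{beta_above_1_lemma} using the explicit three-seller reserve price already recorded just above the corollary, namely
\begin{equation*}
P_M = \frac{c}{1-\arctan\!\left(\sqrt{\tfrac{3\mu}{1-\mu}}\right)\cdot \sqrt{\tfrac{1-\mu}{3\mu}}}.
\end{equation*}
The inequality $P_M>c/\mu$ then says (after multiplying out by the positive denominator and rearranging) that
\begin{equation*}
\arctan\!\left(\sqrt{\tfrac{3\mu}{1-\mu}}\right)\cdot \sqrt{\tfrac{1-\mu}{3\mu}} > 1-\mu.
\end{equation*}
I would substitute $t=\sqrt{3\mu/(1-\mu)}$, so that $\mu=t^2/(3+t^2)$ and $1-\mu=3/(3+t^2)$; the target inequality collapses to the single-variable claim
\begin{equation*}
\arctan(t) \;>\; \frac{3t}{3+t^2} \quad\text{for all } t>0,
\end{equation*}
where the range $t\in(0,\infty)$ covers exactly $\mu\in(0,1)$.

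Next I would prove this calculus fact by putting $f(t)=\arctan(t)-3t/(3+t^2)$, noting $f(0)=0$, and showing $f'(t)>0$ on $(0,\infty)$. Differentiating,
\begin{equation*}
f'(t)=\frac{1}{1+t^2}-\frac{3(3-t^2)}{(3+t^2)^2},
\end{equation*}
and multiplying through by the strictly positive factor $(1+t^2)(3+t^2)^2$ turns $f'(t)>0$ into
\begin{equation*}
(3+t^2)^2 > 3(3-t^2)(1+t^2),
\end{equation*}
which after expansion simplifies to $4t^4>0$. This is manifestly valid, so $f$ is strictly increasing and hence positive on $(0,\infty)$, which gives the corollary.

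A shorter alternative is to chain two facts already available: lemma \ref{2sellercond1} gives $P_2>c/\mu$, and the remark accompanying lemma \ref{3sellercond1} (illustrated by figure \ref{pm3pm2}) establishes $P_3>P_2$; together they yield $P_3>c/\mu$ with no further computation. I would still present the direct route as primary, since it does not rely on a figure-based inequality.

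The main obstacle in the direct route is the sign handling in $f'(t)$: the factor $3-t^2$ changes sign at $t=\sqrt{3}$, which invites an unnecessary case split. The cross-multiplication step handles both signs uniformly because $(1+t^2)(3+t^2)^2$ is positive, and the miraculous cancellation to $4t^4$ is what makes the argument clean; verifying that cancellation carefully is the only delicate step.
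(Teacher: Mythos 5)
Your proposal is correct, but your primary (direct) route is genuinely different from the paper's. The paper proves this corollary in one line by chaining exactly the two facts you relegate to your ``shorter alternative'': the two-seller result $P_2>c/\mu$ from Lemma \ref{2sellercond1} and the ordering $P_3>P_2$, the latter justified only by the numerical comparison in Figure \ref{pm3pm2} (and by appeal to Table 1 of Stahl). Your direct route instead substitutes the explicit three-seller expression for $P_M$, correctly reduces $P_M>c/\mu$ to $\arctan(t)>3t/(3+t^2)$ for $t=\sqrt{3\mu/(1-\mu)}$, and proves that by showing $f'(t)>0$ via the cancellation $(3+t^2)^2-3(3-t^2)(1+t^2)=4t^4$; I checked the algebra and it is right. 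What each buys: the paper's argument is short but inherits the figure-based (non-analytic) status of $P_3>P_2$, whereas yours is self-contained and fully rigorous, and in fact upgrades the corollary from a graphically supported claim to a proved one. The only step you should make explicit is that clearing the denominator $1-\arctan(t)\cdot t^{-1}$ preserves the inequality's direction because that denominator is positive, which follows from $\arctan(t)<t$ for $t>0$ (and is anyway required for $P_M>0$).
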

Since the condition is $P_M > c/\mu$, if it holds for $P_2$, it would also hold for $P_3$. This is in line with Table 1 in Stahl \cite{Stahl89}, suggesting that reserve price is rising with the number of sellers.

An additional result available for three sellers is shoppers' indifference:
\begin{lemma}
Consider the original three seller Stahl model. Shoppers expected price is exactly $P^*$.
\end{lemma}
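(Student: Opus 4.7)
The plan is to sidestep a direct integral computation, which would force us to manipulate the $\arctan$ expression for $E(F)$ given just above, and instead to use an aggregate-revenue identity that in fact holds for any number of sellers, specialised here to $n=3$.

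First I would evaluate each seller's equilibrium expected profit at the reserve price itself. Since $F(P_M)=1$, a seller posting $P_M$ attracts no shoppers, and sells only to the share $(1-\mu)/3$ of searchers who sample her first, yielding profit $P_M(1-\mu)/3$. Because sellers are indifferent across the support in the symmetric mixed equilibrium, this is the expected profit at every price. Summing over the three sellers gives total industry expected revenue $P_M(1-\mu)$.

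Next I would re-express the same total by consumer type. The support of $F$ lies in $[P_L,P_M]$ and $P_M$ is the reserve price, so every searcher accepts the first price she sees; in a symmetric equilibrium with iid pricing her expected payment is $E(F)$, so the aggregate payment from the mass $1-\mu$ of searchers is $(1-\mu)E(F)$. Shoppers, of mass $\mu$, always pay $\min(P_1,P_2,P_3)$, contributing $\mu\,E[\min(P_1,P_2,P_3)]$ in aggregate. Equating the two expressions for industry revenue gives
\begin{equation*}
P_M(1-\mu) \;=\; (1-\mu)\,E(F) \;+\; \mu\,E[\min(P_1,P_2,P_3)].
\end{equation*}

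Finally, I would invoke the reserve-price identity $P_M - E(F) = c$ and the definition $P^* = c(1-\mu)/\mu$ to solve for the shopper's expected price, obtaining $E[\min(P_1,P_2,P_3)] = c(1-\mu)/\mu = P^*$, as claimed. The only non-routine point is justifying the aggregation of searcher payments, namely that each searcher is truly matched to a uniformly random store and that payments add up as stated; but this is bookkeeping, and as a by-product the argument transparently explains why the analogous two-seller lemma held and why the same conclusion would in fact extend to any $n$.
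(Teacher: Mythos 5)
Your argument is correct, but it is not the route the paper takes. The paper's proof is a direct computation in the style of the two-seller case: it forms the distribution of the minimum of three i.i.d.\ draws from $F$, namely $1-(1-F)^3$, and evaluates its expectation explicitly using the closed-form $\arctan$ expression for $E(F)/P_M$, checking that the number that comes out equals $P^*$. Your accounting identity sidesteps the integration entirely: the equal-profit condition evaluated at the reserve price (where a seller wins no shoppers) gives each seller expected profit $P_M(1-\mu)/3$, hence industry revenue $P_M(1-\mu)$; decomposing the same total by consumer type as $(1-\mu)E(F)+\mu\,E[\min(P_1,P_2,P_3)]$ and invoking $P_M-E(F)=c$ yields $E[\min]=c(1-\mu)/\mu=P^*$. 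This buys two things the paper's method does not: it shows the shoppers' indifference is a structural consequence of the equilibrium profit and reserve-price identities rather than a numerical coincidence of the three-seller integral, and it extends the lemma verbatim to any number of sellers --- a case the paper explicitly leaves open because no closed form for $E(F)$ exists for general $n$. The only step worth spelling out (as you note) is that every searcher accepts the first sampled price, so her expected payment is $E(F)$; this holds because the equilibrium support lies weakly below the reserve price and $F$ is atomless. With that observation the argument is complete and, if anything, preferable to the paper's.
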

Similar to the two seller case, applying the expected value of minimum of three $F$ distr. variables would yield the result.

Remember that $P^*=c(1-\mu)/\mu$. Thus, we can calculate:
\begin{equation*}
P^* = P_M (1-\arctan{\left(\sqrt{\frac{3\mu}{1-\mu}}\right)})
\end{equation*}

Additionally, the distribution of shoppers price is given by $1-(1-F)^3$. Calculating the expected value would yield $P^*$. \qed

%This turns out to be the expected price for shoppers in this case too.
%To see this simply take the expected value of minimum of 3 iid variables with distribution $F$, namely $G=1-(1-F)^3$. The result will be exactly $P^*$. From here follows:

%\begin{lemma}
%Also in the case of 3 sellers the uninformed consumers are indifferent in expected terms between the original model and the model where a low bound of $P^*$ is imposed.
%\end{lemma}

\begin{remark}
One can repeat the exercise with any desired number of sellers (n) and a desired share of shoppers ($\mu$). Then, a numeric calculation, or, perhaps, in some cases even an analytic expression, for the expected value of $F$ can be found. Then it is possible to calculate $P_M$ and verify whether it is above $P^*+c$.
\end{remark}

\begin{figure}
 \centering
\includegraphics[width=140mm]{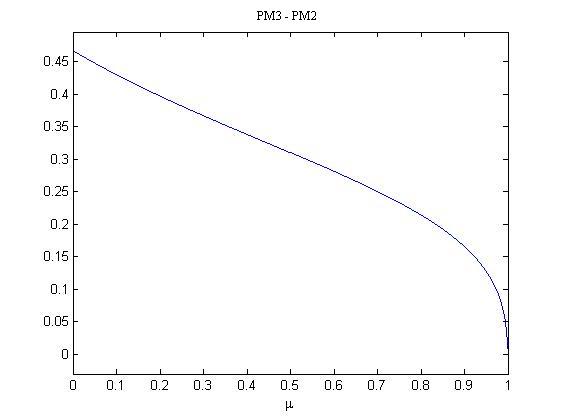}
  \caption{$P_M$ with 3 sellers minus $P_M$ with 2 sellers as func. of $\mu$, in $c$ units}
  \label{pm3pm2}
\end{figure}

\section{Discussion}

A very important question is on the intuition for such a result. A hint can be provided from $\mu$ comparative static. As Figure \ref{pmps} suggests, the lower the share of shoppers, the higher is the effect of introducing such a bound. Additionally, if the price limitation is set \textbf{above} $P^*$, based on theorem \ref{pstar_thm} - the pure equilibrium would still exist and is unique. Therefore, no additional gain for consumers can be obtained. 

Note that if the limitation is set at $P^*$ the information gain of shoppers is exactly zero. They do not get better offers due to their knowledge, but get the same offer as all consumers. Moreover, they get the same offer as before (in expected terms). Therefore, a possible explanation to this phenomenon is information. The law provides additional information to searchers regarding what is cheap and what is not. Using this information searchers form more informed beliefs and get a better deal when purchasing the item.

Additionally, the signal needs to be sufficiently credible. If the bound is set too low, no consumer would believe that sellers would go THAT low on pricing. For example, a low bound set below the support of the original Stahl model would probably have zero effect on results, as all sellers would keep on playing the original model equilibrium. Therefore, this bound needs to be set sufficiently high in order to be credible.

\begin{figure}%\label{PetPrice}
 \centering
\includegraphics[width=100mm]{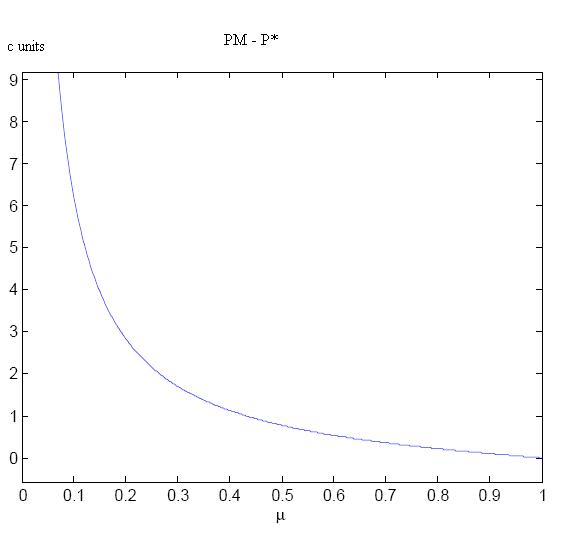}
  \caption{$P_M-P^*$ in $c$ units as func. of $\mu$ (2 sellers)}
  \label{pmps}
\end{figure}

As a result searchers set their reserve price sufficiently close to the bound price. Sellers anticipate it, and since no seller wishes to be above the reserve price, the price dispersion is lower. Additionally, since there is a motivation to attract shoppers and be cheapest, a seller would compare the prices scope available to her. If other sellers have sufficient mass at the bound price, she would not want to deviate from it too. Since the information rent from searchers is low, due to their lower reserve price, it is more attractive to compete for the shoppers.

Therefore, as an outcome we receive the pure equilibrium.

When the bound price is set at a lower level, this effect diminishes and there are mass points at $P_C$. The mass on $P_C$ decreases with the bound. Still, the expected price and reserve price are below the original model equilibrium.

\subsection{Model Relevancy}

The following characteristics are important for the model and results:
\begin{itemize}
\item Homogeneous goods with same production costs
\item Fixed demand which does not depend on the price
\item Information asymmetries among consumers
\item Small number of big players
\item Large price dispersion
\item Reserve price is above $c/\mu$.
\end{itemize}

The first three characteristics are basic for the Stahl model. Firstly, in the Stahl model all consumers end up purchasing the good, no matter the prices. This could reflect a market of essential goods, such as electricity supply, bank account, or as seen by many, cellphone and Internet connection. One may argue that Alcohol is not best described, but as found by Jannsen et al. in \cite{StahlOkEmp}, the Stahl model is significant also for other markets. Additionally, the Stahl model involves a homogeneous good and some asymmetry among consumers. The next point is due to certain results in case of several sellers. This can motivate a covert step to reduce profits in an oligopolistic market. If the market does not have big players or no big ability to fight politicians, then a more harsh approach can be used, as severe taxation. If there are big players with policy influence, then perhaps such a step can be used to reduce their revenue. The next point elaborates on the price reduction. Large price dispersion implies large information rents which would be absent in the pure equilibrium, and imposes also the last, endogenous point. 

The last point is a condition for the new equilibrium to reduce prices, as seem in lemma \ref{beta_above_1_lemma}. Unfortunately, it is not possible to give a general analytic expression to $P_M$, and therefore it involves this endogenous parameter. If this condition does not hold then the lowest possible price reduction is too close to the original model reserve price. Therefore, it is not possible to benefit consumers. However, as shown in lemmas \ref{2sellercond1} and \ref{3sellercond1} it holds for two and three sellers. Note that it may also generally hold, as table 1 in Stahl \cite{Stahl89} suggests that the reserve price increases with $N$. If it is indeed the case, then the condition depicted in lemma \ref{beta_above_1_lemma} will hold for any number of sellers, with similar logic to the one presented in lemma \ref{3sellercond1}.

The strongest result is presented in lemma \ref{interval_col}. The effect is prevailing over an interval of prices. Moreover, it is the maximal possible interval, since minimum price below $P_L$ allow original equilibrium and prices above $P_M-C$ only allow prices above EPO. Therefore, setting a minimum price at a market price level has an opposite effect to a possible original intuition.

%\subsection{Taxation Comparison}

%As the theory suggests, the minimum price case may end up with a unique market price. In this section, we take a short look how does it work with severe taxation. An example for a relevant market, except for the fixed demand (and some difference between cigarettes), could be the Israeli tobacco market \footnote{For example, see $http://www.globes.co.il/news/article.aspx?did=1000827307$ or $http://www.timesofisrael.com/smokers-choose-to-butt-out-rather-than-pay-higher-cigarette-tax/$}. It has three major players, and there is large price dispersion on the market. Due to the high taxation price dispersion is relatively low (18-20 NIS for cheap brands and 28-30NIS for expensive brands). However, this is far from the pure equilibrium and market wide fixed price obtained with the minimum pricing. Here taxation is around 85\% of the price, according to Globes, yet it does not eliminate price competition. 

%An additional effect observed is the decrease in demand due to increasing prices, as empirically noted by  Stockwell et al. \cite{minprice}, and Wagenaar et al. \cite{wagenaar}. Therefore, if prices would be reduced due to a legislation step demand would increase. On the other hand, if prices increase demand will be lower. This is shown by Cnossen, Sijbre \cite{alc_price} in Table 8, with a multitude of supporting evidence.

\section{Summary}

This paper studies the impact of a lower bound on the price. If such lower bound is introduced, it may possible to reduce price significantly, and the equilibrium would be unique. Clearly, when the price limitation is set below the lowest price on equilibrium it would have no effect, and when it is set above the expected price in equilibrium the prices would be higher. In the intermediate cases the effect is rather surprising at first sight. In the case with 2 sellers, for the entire interval the expected price in equilibrium with minimum price is lower. Thus, when the lowest allowed price set at any relevant price level, expected price offered by sellers gets significantly lower. Still, all sellers have a certain positive, though lower, profit. Therefore, it is not expected that due to such change sellers will go bankrupt. The lowest EPO is obtained when the limitation is set at the level of $P^*$, and it reduces prices for the case of 2 sellers, and, judging by Table 1 in Stahl \cite{Stahl89} probably also with a general number of sellers. 

The price reduction is increasing in the amount of searchers, suggesting that the reason driving the result is providing searchers with valuable information. This helps consumers to define what is cheap, and what can be expected, due to such clear legislation. The fact that at the level price $P^*$ shoppers get the same offer (in expected terms) strengthens this intuition. Lastly, when price limit is set above the critical level of the pure NE, the pure equilibrium still prevails, but with a higher price offered by sellers one to one. Thus, when the information rent is zero, increasing the limiting price has only negative effect on consumers, due to a ban on lower prices.

If the original Stahl model equilibrium uses prices between $P_L$ and $P_M$ (with EPO at $P_M-c$), then, in the case of two sellers setting the minimum price at levels between $P_L$ and $P_M-c$ prices would drop. The drop in prices would be maximized at $P^* = c \frac{1-\mu}{\mu}$. On both sides of $P^*$ it changes continuously. Therefore, one must be careful when setting a minimum price, as it may have an effect opposite to the intention of the legislator. If the limitation is set weakly above $P^*$ the resulting equilibrium would be a pure one, when all sellers set their price at the limitation price. When the limitation is set lower, sellers share a mixed strategy: $P_C$ would have a mass point, and a distribution between two prices strictly above $P_C$. However, EPO still remains below the level of the original equilibrium.

Additional effect of introducing a low bound price is limiting predatory pricing, for example see Snider \cite{Airlines} and Bolton and Scharfstein \cite{Predators}. This would allow additional players to enter the market on the seller side increasing competition. In the real world many markets are oligopolistic and deter new entry, which would not be possible once low bound prices are introduced. This will impose an additional positive impact on many markets, and make them far less oligopolistic. This could add an additional discount in pricing, due to higher number of entrants, which fall beyond the scope of this paper.

\subsection{Future Research}

This paper is, to the author's knowledge, one of the first papers suggesting that imposing a minimum price would reduce prices. Opening a new door often adds many questions and new directions. Some of them are introduced below.

A natural next step is taking these results to the lab, or the real world. Empirically compare offered prices in the original and price limited cases and verify the results here. Then it would be politically possible to offer such measures also in the real world, as today no policy maker would consider such step as price reducing.

Additional theoretic steps that provide interest include comparative statics. How changes in $\mu$ and $N$ affect the results, and price reduction in the most general case. The results here suggest larger benefit when $\mu$ is lower for 2 or 3 sellers, as does shifting from 2 to 3 sellers. General result for any number of sellers would be very useful, yet the general expression may be not analytic. An additional robustness check would be to look at possible asymmetric equilibria of the model, and verify whether in all of them price would be reduced.

An additional important step, is checking the case of heterogeneous searchers, in a model similar to Stahl \cite{StahlHetro}, or heterogeneous sellers, in a similar model to Astorne-Figari and Yankelevich \cite{AsymmetricSearch}. One can examine a completely different search model, for example the one from Varian \cite{Varian}, and verify whether such bounds reduce prices there.

\begin{appendix}
\section{Omitted Proofs}

We begin with the proof for theorem \ref{pstar_thm}.
\begin{theorem}
If prices below $P^*$ are forbidden, exists a unique pure strategy equilibrium (yet possibly some additional mixed ones), where all sellers select $P^*$ purely, and $P^*+c$ is the reserve price.
\end{theorem}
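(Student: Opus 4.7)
The plan is to prove the theorem in two stages: first verify that the proposed profile is a Nash equilibrium, and then rule out every other pure-strategy profile. For the existence half, fix the profile in which every seller posts $P^*$ with probability one and searchers set reserve price $P^* + c$. Belief consistency is immediate: the equilibrium price distribution is a point mass at $P^*$, so $E(F) = P^*$, and the reserve-price identity $P_M = E(F) + c$ from the earlier lemma yields $P_M = P^* + c$. For the seller best response, a seller on the equilibrium path earns $P^*/N$, splitting shoppers uniformly with her $N-1$ tied rivals and taking her $(1-\mu)/N$ share of the searchers. A deviation to $p \in (P^*, P^* + c]$ cedes all shoppers while retaining the searcher share (who accept since $p$ is at most the reserve), with maximal profit $(P^* + c)(1-\mu)/N$; a deviation to $p > P^* + c$ is rejected by searchers, who pay $c$ and find $P^*$ at any other store, so it earns nothing; deviations below $P^*$ are forbidden. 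The binding non-deviation condition reduces to
\begin{equation*}
\frac{P^*}{N} \;\ge\; \frac{(P^* + c)(1-\mu)}{N},
\end{equation*}
which rearranges to $P^* \mu \ge (1-\mu) c$ and holds with equality by the definition $P^* = c(1-\mu)/\mu$.

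For uniqueness among pure profiles, consider any candidate $(p_1, \ldots, p_N)$ ordered $p_1 \le \cdots \le p_N$, with induced reserve $P_M'$. A Bertrand-style undercutting argument forces the minimum $p_1$ to equal the floor $P^*$: if $p_1 < p_2$, seller 1 strictly gains by raising her price toward $\min(p_2, P_M')$ while keeping the full shopper mass and her searcher share, and if $k \ge 2$ sellers tie at a common minimum $p^* > P^*$, any of them can shave $\epsilon$ and capture the entire shopper mass, giving profit approximately $(p^* - \epsilon)[\mu + (1-\mu)/N]$ versus the on-path $p^*[\mu/k + (1-\mu)/N]$, a strict gain since $\mu > \mu/k$ for $k \ge 2$. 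Once $p_1 = P^*$, any seller $j$ with $p_j > P^*$ can profitably jump down to join the tied minimum: the same algebraic comparison as in the existence half, adapted to the relevant multiplicity, shows that trading the at-most $p_j(1-\mu)/N$ status-quo profit for the tied-minimum profit $P^*[\mu/(k+1) + (1-\mu)/N]$ is strictly profitable (with the case $p_j > P_M'$ trivially preferring $P^*$, since it then makes zero revenue). Hence the only pure equilibrium is the symmetric profile at $P^*$.

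The main obstacle is pinning down the searchers' reserve price consistently in the off-path candidate profiles, since the reserve-price lemma stated earlier was derived for symmetric mixed equilibria and does not apply verbatim in asymmetric pure profiles. One has to argue from first principles: the searcher accepts $p$ iff $p$ does not exceed the expected cheapest price attainable by paying the cost $c$ and continuing to sample. In the candidate equilibrium at $P^*$ every further sample returns $P^*$, pinning the reserve at exactly $P^* + c$, with acceptance at equality following the standard tie-breaking convention. Off-path candidates require a parallel consistency check, which involves case-work but no conceptually new idea. The remaining work — the exactly-binding incentive constraint at $P^*$, the feasibility of undercuts given the floor, and the asymmetric-profile bookkeeping — is routine.
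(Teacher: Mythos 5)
Your existence argument is essentially the paper's proof: the same on-path profit $P^*/N$, the same classification of upward deviations into $(P^*,P^*+c]$ (searchers only, maximized at $P^*+c$ with profit $(P^*+c)(1-\mu)/N$) versus above $P^*+c$ (zero profit), and the same rearrangement to $P^*\mu \ge c(1-\mu)$. The paper, by contrast, dispatches uniqueness in one sentence (``any price other than $P^*$ cannot have positive mass \ldots due to undercutting''), so your explicit Bertrand-style case analysis over pure profiles is a genuine elaboration rather than a restatement.

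There is, however, one step in your uniqueness half that fails exactly at the knife-edge the theorem is built on. You claim that a seller at $p_j > P^*$ \emph{strictly} prefers to jump down and join the $k$ sellers tied at $P^*$. Take $p_j = P^*+c$ with $k = N-1$: staying yields $(P^*+c)(1-\mu)/N$, joining yields $P^*\bigl(\mu/N + (1-\mu)/N\bigr) = P^*/N$, and these are \emph{equal} by the definition $P^* = c(1-\mu)/\mu$ — the same equality that makes your existence constraint bind. Under the tie-breaking convention you adopt (searchers accept at the reserve price), the profile with $N-1$ sellers at $P^*$ and one at $P^*+c$ then survives all your tests for $N\ge 3$: the high seller is indifferent, and each low seller strictly loses the shoppers by moving up since $N-2\ge 1$ rivals remain at the minimum. (For $N=2$ your first observation — the unique minimum seller profitably raises her price toward $p_2$ — does eliminate it.) So either restrict the uniqueness claim to strict equilibria, assume the floor is strictly above $P^*$ so the inequality is strict, or adopt the convention that searchers reject at exactly the reserve price. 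This is a gap the paper's own one-line argument shares, but your write-up asserts strictness where there is only indifference.
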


\begin{proof}
The main point in proving the theorem is to show that no seller can deviate. Beforehand note that reserve price of $P^*+c$ is rational, as the expected seller price in equilibrium is $P^*$. Therefore, if a seller offers above $P^*+c$ it is worthy to search on and encounter $P^*$ at the next store. If offered below $P^*+c$ then an additional search is not worthy.

The profit for a seller in equilibrium is as follows:
\begin{equation*}
\pi(P^*) = P^* / N
\end{equation*}

Sellers cannot deviate to a lower price. If a seller deviates to a higher price then there are two possibilities:
\begin{itemize}
\item If the price is weakly below $P^*+c$ then the seller will sell only to searchers initially visiting her store
\item If the price is above $P^*+c$ then nobody will purchase at the store.
\end{itemize}

In the latter case the profit is zero, and in the former case the profit is maximized when the price offered is exactly $P^*+c$. The profit when offering $P^*+c$ is:
\begin{equation*}
\pi(P^*+c)= (P^*+c)(1-\mu)/N
\end{equation*}

Therefore, the deviation is not profitable iff $\pi(P^*) \geq \pi(P^*+c)$, which implies:
\begin{eqnarray*}
P^* / N \geq (P^*+c)(1-\mu)/N \\
P^* \geq (P^*+c)(1-\mu) \\
P^* \mu \geq c(1-\mu)\\
P^* \geq c \frac{1-\mu}{\mu}
\end{eqnarray*}
The last inequality holds due to the definition of $P^*$. Therefore, any minimum price weakly above $P^*$ will also work.

Note that any price other than $P^*$ cannot have positive mass in equilibrium strategy distribution. This is due to undercutting.

\end{proof}

Next the proof for lemma \ref{2sellercond1} is provided.

\begin{lemma}
The condition from lemma \ref{beta_above_1_lemma} holds for the case of two sellers.
\end{lemma}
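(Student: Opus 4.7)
The plan is to use the explicit formula for the reserve price in the two-seller case, namely equation (\ref{PM_2_explicit}), and reduce the desired condition $P_M > c/\mu$ to a well-known elementary inequality involving the logarithm.

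Concretely, I would first substitute
\begin{equation*}
P_M = c\,\frac{2\mu}{2\mu-(1-\mu)\kappa}, \qquad \kappa = \ln\frac{1+\mu}{1-\mu},
\end{equation*}
into the condition $P_M > c/\mu$. Canceling $c$ and multiplying through by $\mu$, the condition becomes $\frac{2\mu^{2}}{2\mu-(1-\mu)\kappa} > 1$. Since $P_M$ is a positive finite price, the denominator $2\mu-(1-\mu)\kappa$ is strictly positive, so cross-multiplying is legal and the condition reduces to $2\mu^2 > 2\mu - (1-\mu)\kappa$, i.e.\
\begin{equation*}
(1-\mu)\kappa > 2\mu(1-\mu).
\end{equation*}
Dividing by $1-\mu > 0$, this collapses to the single scalar inequality
\begin{equation*}
\ln\frac{1+\mu}{1-\mu} > 2\mu \qquad \text{for all } \mu\in(0,1).
\end{equation*}

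The final step is to verify this elementary inequality. My preferred route is calculus: define $f(\mu)=\ln\frac{1+\mu}{1-\mu}-2\mu$; then $f(0)=0$ and $f'(\mu)=\frac{1}{1+\mu}+\frac{1}{1-\mu}-2=\frac{2\mu^{2}}{1-\mu^{2}}>0$ on $(0,1)$, so $f>0$ on $(0,1)$. (Equivalently, the Taylor expansion $\ln\frac{1+\mu}{1-\mu}=2\mu+\tfrac{2}{3}\mu^{3}+\tfrac{2}{5}\mu^{5}+\cdots$ makes the strict inequality manifest.)

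The main obstacle, such as it is, is purely bookkeeping: I need to justify that the denominator $2\mu-(1-\mu)\kappa$ is positive before cross-multiplying. This can be dispatched either by appealing to the fact that $P_M$ is an equilibrium reserve price (hence positive) already established in the derivation of (\ref{PM_2_explicit}), or by a small direct Taylor argument showing $(1-\mu)\kappa<2\mu$ on $(0,1)$. Once that sign issue is handled, the whole lemma collapses to the one-line log inequality above.
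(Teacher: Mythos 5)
Your proposal is correct and follows essentially the same route as the paper: both substitute the explicit two-seller expression for $P_M$ (equivalently $c = P_M(1-\tfrac{1-\mu}{2\mu}\kappa)$), reduce the condition $P_M>c/\mu$ to the scalar inequality $\kappa=\ln\tfrac{1+\mu}{1-\mu}>2\mu$ on $(0,1)$, and verify it by noting equality at $\mu=0$ and comparing derivatives. Your explicit attention to the sign of the denominator $2\mu-(1-\mu)\kappa$ is a small extra point of care the paper leaves implicit, but it does not change the argument.
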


\begin{proof}

From equation \ref{PM_2_explicit}, it follows that:
\begin{eqnarray*}
f(p) = & \frac{1-\mu}{2\mu}  \frac{P_M}{P^2} \\
E(F(P)) = & P_M-c = \frac{1-\mu}{2\mu} \kappa P_M \\
c = & P_M (1- \frac{1-\mu}{2\mu}\kappa)
\end{eqnarray*}

The required condition is $\mu P_M >c$. This will hold iff:
\begin{equation}
\frac{2\mu - (1-\mu)\kappa}{2\mu} < \mu
\end{equation}

Elaborating the expression yields:

\begin{eqnarray*}
\frac{2\mu - (1-\mu)\kappa}{2\mu}<\mu \\
2\mu - (1-\mu)\kappa<2\mu^2 \\
2\mu - 2\mu^2 < (1-\mu)\kappa
\end{eqnarray*}

This is equivalent to: $\kappa > 2\mu$.
Note that both expressions are 0 when $\mu=0$ and both expressions are increasing in $\mu$ when $\mu \in (0,1)$. Comparing the derivatives one sees that:
\begin{equation}
\kappa' = \frac{1}{1+\mu}+\frac{1}{1-\mu}=\frac{2}{1-\mu^2} > 2 = (2\mu)' \forall \mu \in (0,1)
\end{equation}

Thus $\kappa$ raises more steep than $2\mu$. Since equality is obtained at 0, it follows that in the range $\kappa > 2\mu$.

\end{proof}

Lastly, the proof for proposition \ref{unique_2sel} is provided:
\begin{proposition}
Consider a two seller Stahl model. Suppose a limitation is imposed at price $P^*$. Then the equilibrium, as shown in theorem \ref{pstar_thm} is unique, and no additional mixed or asymmetric equilibria exist.
\end{proposition}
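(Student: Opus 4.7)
The plan is to classify every Nash equilibrium of the two-seller game with cutoff $P^*$ and show each collapses to the pure profile of Theorem \ref{pstar_thm}. Write $F_1, F_2$ for the sellers' strategies, both supported in $[P^*, M]$. Seller $i$'s profit at price $p$ in her support is $\pi_i(p) = p \bigl[ (1-\mu)/2 + \mu (1 - F_{-i}(p)) \bigr]$, and indifference forces $\pi_i$ constant on $\mathrm{supp}(F_i)$. Standard shifting/undercutting arguments force $\inf \mathrm{supp}(F_1) = \inf \mathrm{supp}(F_2)$ and $\sup \mathrm{supp}(F_1) = \sup \mathrm{supp}(F_2) = \bar p$. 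Evaluating $\pi_i$ at $\bar p$ gives the common value $\bar p (1-\mu)/2$ for both $i$, so the two indifference equations coincide and pin down $F_1(p) = F_2(p) = 1 - \frac{1-\mu}{2\mu}(\bar p/p - 1)$ on the common support. Every equilibrium is therefore symmetric, with common strategy $F$.

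Next I determine the structure of $F$. Atoms of $F$ strictly above $P^*$ are ruled out by ordinary undercutting. If $F$ also had no atom at $P^*$, the equilibrium would coincide with the unrestricted Stahl equilibrium, whose lower support $P_L = P_M(1-\mu)/(1+\mu)$ would then have to lie above $P^*$; but this inequality reduces to $\kappa \geq 2\mu/(1-\mu^2)$, and the power-series comparison $\kappa = 2\sum_{k\geq 0}\mu^{2k+1}/(2k+1) < 2\sum_{k\geq 0}\mu^{2k+1} = 2\mu/(1-\mu^2)$ gives the strict reverse inequality for every $\mu \in (0,1)$, a contradiction. Hence $F$ carries an atom of mass $\rho \in (0, 1]$ at $P^*$. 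This atom induces a downward jump of size $\mu P^* \rho /2$ in the profit function at $P^{*+}$, so indifference forces a gap: the support is $\{P^*\} \cup [P_N, P_M]$ for some $P_N > P^*$ whenever $\rho < 1$.

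Finally I pin down $\rho$. Indifference on $[P_N, P_M]$ gives $1 - F(p) = \frac{1-\mu}{2\mu}(P_M/p - 1)$; matching $F(P_N) = \rho$ yields $P_N = P_M(1-\mu)/(1+\mu-2\mu\rho)$, and indifference between $P^*$ and $P_M$ yields $P_M = P^*(1+\mu-\mu\rho)/(1-\mu)$. Imposing rationality of the reserve price $P_M = c + \rho P^* + \frac{(1-\mu)P_M}{2\mu}\ln(P_M/P_N)$ and substituting $P^* = c(1-\mu)/\mu$ reduces, after simplification, to the single equation
\begin{equation*}
h(\rho) \;:=\; 1 - \rho \;-\; \frac{(1-\mu)(1+\mu-\mu\rho)}{2\mu}\,\ln\frac{1+\mu-2\mu\rho}{1-\mu} \;=\; 0,
\end{equation*}
which is satisfied at $\rho=1$ (where $P_N$ collapses to $P_M$, recovering the pure atom at $P^*$). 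The main obstacle is ruling out interior roots — this is exactly where the value $P^* = c(1-\mu)/\mu$ matters, since strictly smaller cutoffs do produce partially mixed equilibria. I would handle it by a convexity argument: direct computation yields $h''(\rho) = 2\mu^2(1-\mu)\rho/(1+\mu-2\mu\rho)^2 \geq 0$ on $[0,1]$, so $h'$ is nondecreasing; combined with $h'(1)=0$ this gives $h'\leq 0$ on $[0,1]$, hence $h$ is nonincreasing; together with $h(1)=0$ this forces $h(\rho)>0$ for all $\rho\in[0,1)$. Therefore $\rho = 1$ is the unique solution, ruling out any mixed or asymmetric equilibrium and establishing uniqueness of the pure profile of Theorem \ref{pstar_thm}.
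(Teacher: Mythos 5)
Your proposal is correct, and it follows the same structural decomposition as the paper's proof: every candidate equilibrium must place an atom of mass $\rho$ at $P^*$, undercutting rules out atoms elsewhere and forces a gap above $P^*$, the indifference conditions give $P_M/P_N=(1+\mu-2\mu\rho)/(1-\mu)$ and $P_M=P^*(1+\mu-\mu\rho)/(1-\mu)$, and the reserve-price condition $E(F)=P_M-c$ collapses everything to one equation in $(\mu,\rho)$ --- your $h(\rho)=0$ is exactly equation (\ref{no_ne_eq}) multiplied through by $1+\mu-\mu\rho>0$. Where you genuinely depart from the paper is at the decisive step: the paper asserts that this equation has no solution for $\rho\in(0,1)$ by pointing to a numerically generated plot (Figure \ref{No_add_eq}), whereas you prove it analytically. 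Your computation $h''(\rho)=2\mu^2(1-\mu)\rho/(1+\mu-2\mu\rho)^2>0$ on $(0,1]$, together with $h'(1)=0$ and $h(1)=0$, does force $h>0$ on $[0,1)$ (I verified both derivative formulas), so $\rho=1$ is the unique root and the pure profile is the unique equilibrium. You also supply two smaller pieces the paper leaves implicit: the explicit check that $P_L<P^*$ (via the power-series comparison $\kappa<2\mu/(1-\mu^2)$), which is needed to exclude the atomless case, and the symmetrization of the supports, which is what actually rules out asymmetric equilibria. The trade-off is that your argument is confined to the specific functional form of the two-seller case, but that is all the proposition claims; in exchange you convert a figure-based assertion into a proof.
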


\begin{proof}
Note that the only additional equilibria where a ban on low prices plays a role are ones where sellers set mass points on the price $P^*$.
Let the mass on the price $P^*$ be denoted $\rho$. Additionally, as before no seller would offer a price above the reserve price.

The profit of a seller offering the price of $P^*$ would be:
\begin{itemize}
\item If the other seller sets price $P^*$ - profit of $P^*/2$
\item In any other case profit of $P^*(\mu+(1-\mu)/2)$.
\end{itemize}
Note that due to undercutting at no price except $P^*$ would be a mass point. Additionally, the reserve price would be the highest price in seller strategy support.

From here follows that seller strategy must be an isolated point at $P^*$, and then an interval from some $P_N$ up to $P_M$. This is since at $P_N$, the next offered price above $P^*$ the profit is:
\begin{itemize}
\item If the other seller sets price $P^*$ - profit of $P_N(1-\mu))/2$
\item In any other case profit of $P_N(\mu+(1-\mu)/2)$.
\end{itemize}
Since $(1-\mu)/2 < 1/2$, we get that $P_N > P^*+\varepsilon$ for some positive $\varepsilon$.

From equal profit when mixing one can get the distribution for the interval $(P_N,P_M)$:
\begin{equation*}
F(P)=1-\frac{1-\mu}{2\mu}(\frac{P_M}{P}-1)
\end{equation*}
Let us denote the mass at $P_L$ as $\rho$. From here, $F(P_N)=\rho$, the ratio of $P_N$ and $P_M$ is as follows:
\begin{equation*}
\frac{P_M}{P_N} = \frac{1+\mu-2\mu \rho}{1-\mu}.
\end{equation*}

Combining continuous distribution with mass point, yields the expected value as follows:
\begin{equation}\label{expval_lowpc}
E(F)= \rho P^* + \frac{1-\mu}{2\mu}\log{\frac{P_M}{P_N}}
\end{equation}
Remember that $E(F)=P_M-c$ and $P^* = c \frac{1-\mu}{\mu}$. Thus, once can replace $c$ and $P^*$ with $P_M$ multiplied by corresponding elements. Note that:
\begin{eqnarray*}
P_M(1-\mu)/2 = P^*(1-\rho/2)\mu+(1-\mu/2)\\
P_M = P^* \frac{1+\mu-\mu\rho}{1-\mu}\\
P_M = c \frac{P_M}{P^*}\frac{P^*}{c}
\end{eqnarray*}
Replacing $P^*$ and $c$ with fractions of $P_M$ would cause $P_M$ to be narrowed down throughout the expression, leaving the equality $E(F)=P_M-c$ as follows:
\begin{equation}\label{no_ne_eq}
\frac{1-\mu}{2\mu}(\log(1+\mu-2\mu\rho)-\log(1-mu)) = \frac{1-\rho}{1+\mu-\mu\rho}
\end{equation}
This equation does not have a solution for $\mu, \rho \in (0,1)$. For the value of the expression depicted at equation (\ref{no_ne_eq}) (left side minus right side) see Figure \ref{No_add_eq}.

\begin{figure}%\label{PetPrice}
 \centering
\includegraphics[width=80mm]{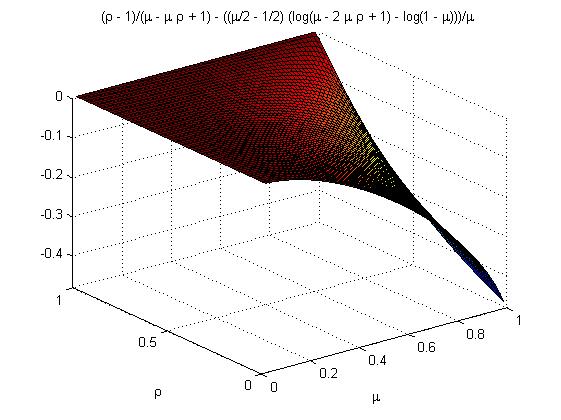}
  \caption{The difference between LHS and RHS of eq. \ref{no_ne_eq}.}
  \label{No_add_eq}
\end{figure}

\end{proof}

and the corresponding corollary:

\begin{corollary}
If lowest price bound set at a level above $P^*$ the pure NE would be unique.
\end{corollary}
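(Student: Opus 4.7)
The plan is to adapt Proposition \ref{unique_2sel}. Existence of the pure NE at $P_C$ with reserve price $P_C + c$ follows immediately from Theorem \ref{pstar_thm} applied to the minimum price $P_C > P^*$, so only uniqueness requires work. For uniqueness among pure profiles, a symmetric pure profile at any common $P > P_C$ is broken by undercutting to $P - \varepsilon \geq P_C$, which lifts the payoff from $P/2$ to approximately $(P - \varepsilon)(1+\mu)/2$; since prices below $P_C$ are forbidden, $P_C$ is the only pure NE candidate.

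For mixed equilibria I would replay the reasoning of Proposition \ref{unique_2sel} with $P_C$ in place of $P^*$. By the same undercutting logic, any such equilibrium must put an atom of mass $\rho \in (0,1)$ at $P_C$ and a continuous piece on an interval $(P_N, P_M)$ strictly above $P_C$ (a gap is forced by the discrete jump in the shopper-winning probability at $P_C$). The seller indifference conditions fix the distribution $F(P) = 1 - \frac{1-\mu}{2\mu}(P_M/P - 1)$ on $(P_N, P_M)$, together with $P_M/P_N = (1+\mu-2\mu\rho)/(1-\mu)$ and $P_M = P_C(1+\mu-\mu\rho)/(1-\mu)$. Combining these with the reserve price identity $P_M = E(F) + c$ and dividing through by $P_C$, the existence of a mixed NE reduces to solvability in $\rho \in (0,1)$ of one scalar equation of the form $h(\rho) = c/P_C$, where $h$ depends only on $\mu$ and $\rho$, not on $P_C$.

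The key observation is that Proposition \ref{unique_2sel}, via Figure \ref{No_add_eq}, has already established $h(\rho) > \mu/(1-\mu) = c/P^*$ for every $\rho \in (0,1)$, with the degenerate endpoint $h(1) = \mu/(1-\mu)$ absorbing the pure NE. Since $P_C > P^*$ strictly lowers the right-hand side below $\mu/(1-\mu)$, we obtain $c/P_C < \mu/(1-\mu) < h(\rho)$ throughout $(0,1)$, so the equation has no admissible solution. Hence no mixed NE exists, and the pure NE at $P_C$ is unique. The only delicate ingredient is the uniform bound $h \geq \mu/(1-\mu)$ on $(0,1)$, which is precisely the content of Figure \ref{No_add_eq}; raising $P_C$ only strengthens this inequality, so no new analytical work is needed for the present corollary.
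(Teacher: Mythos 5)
Your argument is correct and is essentially the paper's own: both reduce the non-existence of mixed equilibria to the non-solvability of the scalar indifference equation from Proposition \ref{unique_2sel}, invoke Figure \ref{No_add_eq} for the sign of the gap at $P_C=P^*$, and observe that raising $P_C$ only widens that gap. Your normalization (isolating $c/P_C$ so the right side \emph{falls} below $\mu/(1-\mu)$) is just the paper's statement that its right-hand side \emph{rises} after moving the $c$-term across, so the two formulations coincide; your version is somewhat more explicit about why monotonicity in $P_C$ is the only new ingredient.
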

The main change is that the ratio between the used $P^*$ and $c$ would be higher. This would increase the right hand side on equation \ref{no_ne_eq}, and it would remain without a solution. This is visible via the logarithm expression than needs now to be even larger than for the $P^*$ minimum price.

\begin{corollary}\label{pcbelow}
In the case that $P_C < P^*$ additional equilibria are possible. For each price for $P_C \in (P_L,P^*)$ exists an equilibrium where sellers have a mass point on $P_C$, and then a continuous distribution over an interval $(P_N,P_M)$ where $P_M$ is the reserve price and $P_N$ is a price strictly larger than $P_C$.
\end{corollary}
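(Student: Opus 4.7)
The plan is to construct the candidate equilibrium explicitly and establish existence via an intermediate-value argument. I would postulate a symmetric strategy that places an atom of mass $\rho \in (0,1)$ on $P_C$ and spreads the remaining mass $1-\rho$ continuously over an interval $(P_N, P_M)$, then pin down $\rho$, $P_N$, and $P_M$ from three indifference conditions.

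Equal profit along the continuous part $(P_N, P_M)$ yields the same functional form as in the original model, namely $F(p) = 1 - \frac{1-\mu}{2\mu}(P_M/p - 1)$, with $P_M$ as the top of the support. The boundary condition $F(P_N) = \rho$ then gives $P_N = P_M(1-\mu)/(1+\mu-2\mu\rho)$, which is strictly above $P_C$ by the same undercutting-gap argument used in the proof of Proposition \ref{unique_2sel}: a seller at $P_N$ captures shoppers with probability $1-\rho$, while at $P_C$ she captures them with probability close to one, forcing $P_N/P_C$ to exceed a quantity bounded away from $1$. Equating the profit at $P_C$ with the profit at $P_M$ gives the second equation relating $P_C$, $P_M$, and $\rho$, and the searcher rationality condition $E(F) + c = P_M$, with $E(F) = \rho P_C + \int_{P_N}^{P_M} p\, dF(p)$, closes the system.

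After eliminating $P_N$ and $P_M$ via the first two relations, I am left with a single implicit equation in $\rho$ parameterized by $P_C$, directly analogous to equation \ref{no_ne_eq} but with $P^*$ replaced by a free $P_C$. At $\rho = 0$ the candidate collapses to the original Stahl equilibrium, which forces $P_C = P_L$; at $\rho = 1$ the continuous piece degenerates and the candidate collapses to the pure-strategy profile of Theorem \ref{pstar_thm}, which forces $P_C = P^*$. Using Lemma \ref{pmrholemma} together with continuity of the implicit function in $\rho$, the intermediate-value theorem delivers, for every $P_C \in (P_L, P^*)$, an admissible $\rho \in (0,1)$, and hence $(P_N, P_M)$, satisfying all three conditions.

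The final step is a no-deviation check. On the gap $(P_C, P_N)$ the profit from an out-of-support price $p$ is $p \cdot ((1-\rho)\mu + (1-\mu)/2)$, a linear increasing function of $p$ whose supremum as $p \uparrow P_N$ matches the equilibrium profit by construction; prices above $P_M$ attract no searchers by definition of the reserve price; and prices below $P_C$ are forbidden by the legal minimum. The delicate step is expected to be the monotonicity used in the intermediate-value argument, since it relies on Lemma \ref{pmrholemma}, which is formulated precisely for equilibria of this form, making the argument essentially a bootstrap; one must also verify that the implicit equation has no spurious roots outside $(0,1)$ for $P_C$ in the stated range.
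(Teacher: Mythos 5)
Your construction follows essentially the same route as the paper: the candidate profile (atom $\rho$ at $P_C$, gap, continuous piece on $(P_N,P_M)$) and the closing identity are exactly those derived in the proof of Proposition \ref{unique_2sel}, and the paper's own one-sentence argument is just the observation that lowering the minimum price below $P^*$ shifts the right-hand side of equation \ref{no_ne_eq} so that a root $\rho\in(0,1)$ appears --- your intermediate-value argument with the boundary identifications $\rho=0\leftrightarrow P_C=P_L$ and $\rho=1\leftrightarrow P_C=P^*$ is a more complete version of that assertion. The circularity you worry about is harmless: existence only needs continuity of the implicit relation between $\rho$ and $P_C$ plus the boundary values, not the monotonicity of Lemma \ref{pmrholemma} (which is only needed for uniqueness, as in Corollary \ref{pcbelow2}).
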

When going to the other direction and subtracting form the right hand side the equation \ref{no_ne_eq} would have a solution, implying an equilibrium of the given form. An additional insight is important:

\begin{corollary}\label{pcbelow2}
The larger the $\rho$ satisfying the equation the closer is the minimum price to $P^*$.
\end{corollary}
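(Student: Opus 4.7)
The goal is to show that the implicit map $\rho \mapsto P_C$ determined by the equilibrium conditions of Corollary \ref{pcbelow} is strictly increasing on $(0,1)$, with $P_C \to P^*$ as $\rho \to 1$. The plan is to redo the derivation leading to equation (\ref{no_ne_eq}) with a general minimum price $P_C$ in place of $P^*$. Three relations pin down the candidate equilibrium: seller indifference between offering $P_C$ and offering $P_M$, giving $P_M/P_C = (1+\mu-\mu\rho)/(1-\mu)$; the normalization $F(P_N) = \rho$, giving $P_M/P_N = (1+\mu-2\mu\rho)/(1-\mu)$; and the reserve-price rationality $E(F) = P_M - c$. Eliminating $P_M$ and $P_N$ produces a single implicit equation $\Phi(\rho, P_C) = 0$, which generalizes (\ref{no_ne_eq}) and reduces to it exactly when $P_C = P^*$.

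Next I would apply the implicit function theorem to this $\Phi$. Monotonicity in $P_C$ at fixed $\rho$ is immediate, since $P_C$ enters linearly on the algebraic side after the substitutions, so $\partial \Phi / \partial P_C > 0$. For $\partial \Phi / \partial \rho$, I would differentiate both the logarithmic term $\log((1+\mu-2\mu\rho)/(1-\mu))$ and the rational term in $(1+\mu-\rho)$, and verify that the combined derivative has the correct sign on $(0,1)$, yielding $dP_C/d\rho > 0$. The boundary behaviour then closes the argument: as $\rho \to 0^+$ the mass point vanishes and $\Phi = 0$ collapses to the original Stahl pricing condition, forcing $P_C \to P_L$; as $\rho \to 1^-$ the interval $(P_N, P_M)$ collapses to a point and $\Phi = 0$ forces $P_C \to P^*$, consistent with Theorem \ref{pstar_thm}.

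The step I expect to be hardest is signing $\partial \Phi / \partial \rho$ uniformly over $\mu \in (0,1)$: the expression mixes a logarithm with rational functions of $\rho$, and Figure \ref{No_add_eq} already shows that the underlying surface is not monotone in a naive way. A cleaner route may be to bypass the explicit derivative and exploit Lemma \ref{pmrholemma} directly: since $P_M$ is decreasing in $\rho$ while $P_M/P_C = (1+\mu-\mu\rho)/(1-\mu)$ is also decreasing in $\rho$ (immediately from the indifference relation), a rate-of-decrease comparison applied to $P_C = P_M \cdot (1-\mu)/(1+\mu-\mu\rho)$ yields monotonicity of $P_C$ itself. Alternatively, a contradiction argument leveraging uniqueness of the equilibrium for each $P_C \in (P_L, P^*)$ would deliver the same conclusion with essentially no computation.
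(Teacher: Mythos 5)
Your plan is essentially the paper's own argument made explicit: the paper derives exactly your relations $P_M/P_C=(1+\mu-\mu\rho)/(1-\mu)$ and $P_M/P_N=(1+\mu-2\mu\rho)/(1-\mu)$ in the proof of Proposition \ref{unique_2sel}, treats a change in $P_C$ as a monotone shift of the right-hand side of equation (\ref{no_ne_eq}), and justifies the corollary solely by the claim that the left-minus-right defect is monotone in $\rho$ (read off Figure \ref{No_add_eq}). The step you flag as hardest --- signing $\partial\Phi/\partial\rho$ uniformly in $\mu$ --- is precisely the step the paper never proves analytically either, so your proposal is, if anything, a more careful version of the same route.
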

Follows directly from the fact that the expression is decreasing as $\rho$ goes further from 1.

The last lemma to be proven is \ref{pmrholemma}:
We have shown that in equilibria when $P_C$ is set below $P^*$, $P_C$ has a mass point of $\rho$. Note that lower $P_C$ implies lower $\rho$. 

\begin{lemma}
$P_M$ is decreasing in $\rho$. That is, higher $\rho$ implies lower $P_M$.
\end{lemma}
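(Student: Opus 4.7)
The plan is to reduce the claim to a one-variable monotonicity statement by eliminating $P_C$ and $P_N$ using the two equilibrium conditions already derived in the excerpt, and then verify positivity of a derivative by a short algebraic check.

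First I would set up the two relations that tie $P_M$, $P_C$, and $\rho$. The equal-profit condition between a seller playing the mass point at $P_C$ (where with probability $\rho$ the opponent ties and shoppers split, and with probability $1-\rho$ the seller undercuts and collects all shoppers) and a seller playing the top of the support $P_M$ (who sells only to her own searchers) gives
\begin{equation*}
P_C\bigl(1+\mu-\mu\rho\bigr)/2 \;=\; P_M(1-\mu)/2,
\end{equation*}
so $P_C = P_M(1-\mu)/(1+\mu-\mu\rho)$. The ratio $P_M/P_N=(1+\mu-2\mu\rho)/(1-\mu)$ is already recorded in the proposition \ref{unique_2sel} proof. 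Using the expression for $E(F)$ in \eqref{expval_lowpc} with $P^{*}$ replaced by the general mass point $P_C$, together with $E(F)=P_M-c$, yields
\begin{equation*}
P_M-c \;=\; \rho P_C \;+\; \tfrac{1-\mu}{2\mu}\,P_M\log\!\Bigl(\tfrac{1+\mu-2\mu\rho}{1-\mu}\Bigr).
\end{equation*}

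Next I would substitute the expression for $P_C$ from the first relation into this second one and divide through by $P_M$. Since $c$ and $\mu$ are fixed exogenous parameters and the right-hand side then depends only on $\rho$, this produces
\begin{equation*}
\frac{c}{P_M} \;=\; 1 \;-\; \frac{\rho(1-\mu)}{1+\mu-\mu\rho} \;-\; \frac{1-\mu}{2\mu}\,\log\!\Bigl(\tfrac{1+\mu-2\mu\rho}{1-\mu}\Bigr) \;=:\; G(\rho).
\end{equation*}
Because $c$ is constant, ``$P_M$ decreasing in $\rho$'' is equivalent to ``$G$ increasing in $\rho$,'' so it suffices to verify $G'(\rho)>0$ for $\rho\in(0,1)$.

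Finally I would differentiate term by term. The rational term has derivative $(1-\mu)(1+\mu)/(1+\mu-\mu\rho)^2$ and the log term contributes $(1-\mu)/(1+\mu-2\mu\rho)$, so
\begin{equation*}
G'(\rho) \;=\; (1-\mu)\left[\frac{1}{1+\mu-2\mu\rho}\;-\;\frac{1+\mu}{(1+\mu-\mu\rho)^2}\right].
\end{equation*}
Clearing denominators, $G'(\rho)>0$ is equivalent to $(1+\mu-\mu\rho)^2 > (1+\mu)(1+\mu-2\mu\rho)$, and expanding both sides leaves the difference equal to $\mu^{2}\rho^{2}$, which is positive for $\rho\in(0,1)$. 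This establishes strict monotonicity.

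The only real obstacle is the bookkeeping of Step 2: one must adapt \eqref{expval_lowpc} from the $P_C=P^{*}$ case used in the proof of Proposition \ref{unique_2sel} to the general $P_C\in(P_L,P^{*})$ case and verify that $P_M/P_N=(1+\mu-2\mu\rho)/(1-\mu)$ still follows from $F(P_N)=\rho$ and the same functional form of $F$ on $(P_N,P_M)$. After that elimination is clean, the derivative computation and the final algebraic inequality are routine.
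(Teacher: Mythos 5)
Your proof is correct, and it is genuinely more rigorous than the one in the paper. The paper's own argument fixes $P_M$, asserts from a numerically generated plot (figure \ref{expval}) that the expected price in equation \eqref{expval_lowpc} is decreasing in $\rho$, and then invokes the consistency condition $E(F)=P_M-c$ to conclude that $P_M$ must fall. That is the same implicit-function logic you use, but the key sign condition is only checked graphically. You instead eliminate $P_C$ and $P_N$ explicitly: the equal-profit condition gives $P_C=P_M(1-\mu)/(1+\mu-\mu\rho)$, the support condition gives $P_M/P_N=(1+\mu-2\mu\rho)/(1-\mu)$, and since $E(F)$ is homogeneous of degree one in $P_M$, the equilibrium condition collapses to $c/P_M=G(\rho)$ with $G$ depending only on $\rho$ and $\mu$. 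Your derivative computation is right: the difference of the cross-multiplied terms is exactly $\mu^2\rho^2>0$, so $G'>0$ and $P_M=c/G(\rho)$ is strictly decreasing. Two bookkeeping points in your favor: you correctly restore the factor $P_M$ in front of the logarithm in \eqref{expval_lowpc} (it is missing in the paper's displayed equation, where the term is dimensionally a pure number), and you correctly generalize the mass point from $P^*$ (the case treated in the proof of Proposition \ref{unique_2sel}) to a general $P_C\in(P_L,P^*)$, which is the regime the lemma actually concerns. What your approach buys is an analytic proof of strict monotonicity on all of $\rho\in(0,1)$ for every $\mu\in(0,1)$, replacing the paper's figure-based verification; the paper's approach buys nothing that yours does not.
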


\begin{proof}
In equation \ref{expval_lowpc} from the appendix the expression for the expected price offered is calculated as a function of $P_M,\mu$ and $\rho$. Suppose for the moment that $P_M$ remains constant. Then, the derivative over $\rho$ is depicted in figure \ref{expval}.
\begin{figure}%\label{PetPrice}
 \centering
\includegraphics[width=100mm]{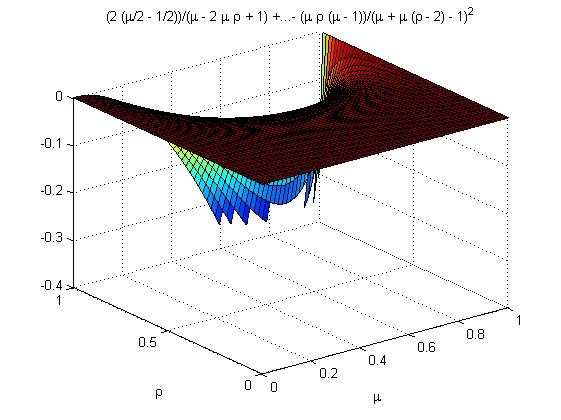}
  \caption{change in EPO as func. of $\mu,\rho$ (2 sellers)}
  \label{expval}
\end{figure}

From figure \ref{expval}, the expected value is decreasing with $\rho$. Remember that $E=P_M-c$. Therefore, if $P_M$ remains constant the difference between $P_M$ and the expected price offered would exceed $c$. From here, $P_M$ must be decreasing in $\rho$.
\end{proof}

\end{appendix}

\end{document}